%% This is file `elsarticle-template-1-num.tex',
%%
%% Copyright 2009 Elsevier Ltd
%%
%% This file is part of the 'Elsarticle Bundle'.
%% ---------------------------------------------
%%
%% It may be distributed under the conditions of the LaTeX Project Public
%% License, either version 1.2 of this license or (at your option) any
%% later version.  The latest version of this license is in
%%    http://www.latex-project.org/lppl.txt
%% and version 1.2 or later is part of all distributions of LaTeX
%% version 1999/12/01 or later.
%%
%% Template article for Elsevier's document class `elsarticle'
%% with numbered style bibliographic references
%%
%% $Id: elsarticle-template-1-num.tex 149 2009-10-08 05:01:15Z rishi $
%% $URL: http://lenova.river-valley.com/svn/elsbst/trunk/elsarticle-template-1-num.tex $
%%
%\documentclass[12pt]{elsarticle}

%% Use the option review to obtain double line spacing
 \documentclass[preprint]{elsarticle}
\usepackage{amsfonts}
\usepackage{amssymb}
\usepackage{amsmath}

%\usepackage[backend=biber,style=authoryear,sorting=nty]{biblatex}
%\addbibresource{sample.bib}

%% Use the options 1p,twocolumn; 3p; 3p,twocolumn; 5p; or 5p,twocolumn
%% for a journal layout:
%% \documentclass[final,1p,times]{elsarticle}
%% \documentclass[final,1p,times,twocolumn]{elsarticle}
%% \documentclass[final,3p,times]{elsarticle}
%% \documentclass[final,3p,times,twocolumn]{elsarticle}
%% \documentclass[final,5p,times]{elsarticle}
%% \documentclass[final,5p,times,twocolumn]{elsarticle}

%% The graphicx package provides the includegraphics command.
\usepackage{graphicx}
%% The amssymb package provides various useful mathematical symbols
\usepackage{xcolor}
\usepackage{amssymb}
%% The amsthm package provides extended theorem environments
\usepackage{amsthm}
\usepackage{amsmath}

\theoremstyle{plain}
\newtheorem{theo}{Theorem}[section]
\newtheorem{lem}[theo]{Lemma}
\newtheorem{prop}[theo]{Proposition}
\newtheorem{cor}[theo]{Corollary}

\theoremstyle{definition}

\newtheorem{ex}[theo]{Example}

\newtheorem{remark}[theo]{Remark}

%\newtheorem{prop}{Proposition}
%\newtheorem{cor}{Corollary}
%\newtheorem{remark}{Remark}
%\newtheorem{lem}{Lemma}
%\newtheorem{theo}{Theorem}
%\newtheorem{ex}{Example}

%% The lineno packages adds line numbers. Start line numbering with
%% \begin{linenumbers}, end it with \end{linenumbers}. Or switch it on
%% for the whole article with \linenumbers after \end{frontmatter}.
%\usepackage{lineno}

%% natbib.sty is loaded by default. However, natbib options can be
%% provided with \biboptions{...} command. Following options are
%% valid:

%%   round  -  round parentheses are used (default)
%%   square -  square brackets are used   [option]
%%   curly  -  curly braces are used      {option}
%%   angle  -  angle brackets are used    <option>
%%   semicolon  -  multiple citations separated by semi-colon
%%   colon  - same as semicolon, an earlier confusion
%%   comma  -  separated by comma
%%   numbers-  selects numerical citations
%%   super  -  numerical citations as superscripts
%%   sort   -  sorts multiple citations according to order in ref. list
%%   sort&compress   -  like sort, but also compresses numerical citations
%%   compress - compresses without sorting
%%
%% \biboptions{comma,round}

% \biboptions{}

\journal{FFA}

\begin{document}

\begin{frontmatter}

%% Title, authors and addresses

\title{A transform approach to polycyclic and serial codes over rings}

%% use the tnoteref command within \title for footnotes;
%% use the tnotetext command for the associated footnote;
%% use the fnref command within \author or \address for footnotes;
%% use the fntext command for the associated footnote;
%% use the corref command within \author for corresponding author footnotes;
%% use the cortext command for the associated footnote;
%% use the ead command for the email address,
%% and the form \ead[url] for the home page:
%%
%% \title{Title\tnoteref{label1}}
%% \tnotetext[label1]{}
%% \author{Name\corref{cor1}\fnref{label2}}
%% \ead{email address}
%% \ead[url]{home page}
%% \fntext[label2]{}
%% \cortext[cor1]{}
%% \address{Address\fnref{label3}}
%% \fntext[label3]{}

%% use optional labels to link authors explicitly to addresses:
%% \author[label1,label2]{<author name>}
%% \address[label1]{<address>}
%% \address[label2]{<address>}

\author[1,m]{Maryam Bajalan }  
\ead{mar.bajalan@gmail.com}
\fntext[m]{This work was completed  while this author visited the Institute of Mathematics of University of Valladolid (IMUVa) during Nov. 2020-- June 2021. She thanks the IMUVa for their kind hospitality.}
\author[2,v]{Edgar Mart\'inez-Moro} 
\ead{edgar.martinez@uva.es}
\fntext[v]{The second author is partially funded
by the Spanish Research Agency (AEI) under Grant PGC2018-096446-B-C21}
\author[3]{Steve Szabo}
\ead{Steve.Szabo@eku.edu}

\address[1]{Department of Mathematics, Malayer University, Hamedan, Iran}
\address[2]{Institute of Mathematics, University of Valladolid, Castilla, Spain}
\address[3]{Department of Mathematics \& Statistics, Eastern Kentucky University}

\begin{abstract}
%% Text of abstract
In this paper, a transform approach is used for polycyclic and serial codes  over finite local rings in the case that the defining polynomials have no multiple roots. This allows us to study them in  terms of linear algebra and invariant subspaces as well as understand the duality in terms of the transform domain. We also make a characterization of when two polycyclic ambient spaces are Hamming-isometric.\end{abstract}

\begin{keyword}
Polycyclic code \sep Duality \sep Finite local ring \sep Mattson-Solomon transform \sep Serial codes

 \MSC 94B15
\sep
13M10
 \sep  15B33

%% or \MSC[2008] code \sep code (2000 is the default)

\end{keyword}

\end{frontmatter}

%%
%% Start line numbering here if you want
%%
%\linenumbers

%% main text

\section*{Introduction}
 Polycyclic codes over a local ring $R$  can be described as ideals on the ring $R[x]/\langle f(x) \rangle$
where $f$ is a polynomial in $R[x]$. They were introduced in \cite{SteveCycli} and are a generalization of cyclic and constacyclic codes which have been extensively studied in the literature. Polycyclic codes over finite fields have been studied from several points of view, see for example \cite{0dual,polyinv},    they have been also  studied over Galois rings \cite{Steve} and recently over chain rings in \cite{Fot20}.
In \cite{polyinv} the authors pointed out that it was worth to generalize their results from finite fields alphabets
to chain rings. In this paper we will make this generalization to finite local rings in the case that the polynomial defining the ambient space has simple roots (see Section~1 for a definition).  We will propose a transform approach that generalizes the classical Mattson-Solomon (Fourier)  transform  for finite fields, moreover, we show the relationship between the transform and the  annihilator duality for polycyclic codes introduced in \cite{0dual}. Note that this approach can be easily translated to the multivariable case as it is pointed out in the last section of this paper.

The outline of the paper is as follows. In  Section~\ref{S:1} we show those results on finite local rings, circulant matrices over rings and matrix diagonalization needed for our work. In Section~\ref{sec:DFT} we review the discrete Fourier Transform over rings as well as some facts on Vandermonde matrices over rings. Section~\ref{S:MS} is devoted to the description on the Mattson-Solomon transforms and its relationship with several inner products both in the original space and its transform image. The main result is Theorem~\ref{th:ann} that shows that all of them generate the same dual code. The generalization to finite local rings of the results in \cite{polyinv} can be found in Section~\ref{S:inv}. In Section~\ref{S:isometric}  we investigate when two different polycyclic definitions provide isomorphic and isometric coding ambient spaces. Finally in Section~\ref{sec:multivariable} we show how all the previous result can be generalized in the case of serial codes. 

\section{Preliminaries}
\label{S:1}

\subsection{Finite local rings}
We will show here selected results about local rings needed in the paper, for a complete account see \cite{Local}.
In this paper $R$ will denote a finite local ring of characteristic $q=p^r$ for a prime $p$ and a positive integer $r$, $\mathfrak m$
will denote the maximal ideal of $R$ and $\mathbb F_q = R/\mathfrak m$ the finite residue field
of $R$. It is well-known that $R$ is trivially complete and thus Hensel, i.e. every element of $R$ is nilpotent or a unit and  $\mathfrak m$ is a nilpotent ideal.  We denote by $\bar\cdot$ the natural polynomial ring morphism $\bar\cdot : R \mapsto (R/\mathfrak m)$ and abusing notation we will use it also for polynomial rings acting on the coefficients $\bar\cdot : R[x] \mapsto (R/\mathfrak m)[x]=\mathbb{F}_q[x]$. % A polynomial $f$  in $R[x]$ is called local if $\mathcal R_f=R[x]/\langle f\rangle$
%is a local ring and a monic polynomial $f$ in $R[x]$ is local if and only
%if $\bar f$ is a power of an irreducible polynomial

Let $\mathcal J$ denote the set of all polynomials $f$ in $R[x]$ such that
$\bar f$ has distinct zeros in the algebraic closure of $\mathbb{F}_q$,  a polynomial in $ \mathcal J$ has distinct zeros in local extensions of
$R$, $\mathcal R_f$
(where $f$ is monic) is a separable local extension
if and only if $f$ is an irreducible polynomial in $\mathcal J$, and the polynomials
in $\mathcal J$ admit unique factorizations into irreducible polynomials and a polynomial in $\mathcal J$ has no multiple roots in any
local extension of $R$. \textbf{Throughout the paper we will restrict to polynomials in $\mathcal J$ unless otherwise stated.} The following two lemmas will be helpful during the paper.

\begin{lem}[Azumaya's Lemma] Let $f$ be a monic polynomial in
$R[x]$. Then $\mathcal R_f=I_1\bigoplus I_2$ where $I_1$ and $I_2$ are ideals in $\mathcal R_f$
if and only if there exist monic coprime polynomials $h$ and $g$ in $R[x]$
with $f=gh$ and $I_1= \langle g\rangle / \langle f\rangle $, $I_2= \langle h\rangle / \langle f\rangle$.
\end{lem}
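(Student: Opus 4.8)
The plan is to prove the two implications separately: the reverse one is a direct Chinese-remainder computation, while the forward one requires the idempotent and product structure of $\mathcal R_f = R[x]/\langle f\rangle$. For the reverse implication, assume $f=gh$ with $g,h$ monic and coprime, i.e. $\langle g,h\rangle = R[x]$, so that $1 = ag+bh$ for some $a,b\in R[x]$. First I would show $\langle g\rangle\cap\langle h\rangle = \langle gh\rangle = \langle f\rangle$: the inclusion $\supseteq$ is clear, and for $\subseteq$ any $p = gu = hv$ satisfies $p = p(ag+bh) = gh(av+bu)$. Passing to the quotient gives $I_1\cap I_2 = 0$, while $I_1+I_2 = (\langle g\rangle+\langle h\rangle)/\langle f\rangle = \langle g,h\rangle/\langle f\rangle = \mathcal R_f$. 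Hence $\mathcal R_f = I_1\oplus I_2$.

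For the forward implication, suppose $\mathcal R_f = I_1\oplus I_2$. Writing $1 = e_1+e_2$ with $e_i\in I_i$, the containment $I_1 I_2\subseteq I_1\cap I_2 = 0$ forces $e_1,e_2$ to be orthogonal idempotents generating $I_1$ and $I_2$. The key step is to identify these idempotents with a factorization of $f$. Since $f\in\mathcal J$, the reduction $\bar f$ has distinct roots, so $f$ admits a factorization $f = f_1\cdots f_k$ into pairwise coprime monic polynomials whose reductions $\bar f_i$ are the distinct irreducible factors of $\bar f$; by the properties of $\mathcal J$ recalled above each $\mathcal R_{f_i}$ is a local ring. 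The Chinese Remainder Theorem then yields a ring isomorphism $\mathcal R_f\cong\prod_{i=1}^{k}\mathcal R_{f_i}$.

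Since each local factor $\mathcal R_{f_i}$ has only the trivial idempotents $0$ and $1$, the idempotent $e_1$ is componentwise $0$ or $1$, which determines a partition $\{1,\dots,k\} = S_1\sqcup S_2$ with $I_1$ supported on $S_1$ and $I_2$ on $S_2$. Setting $g = \prod_{i\in S_2} f_i$ and $h = \prod_{i\in S_1} f_i$, I would check that $g,h$ are monic, that $gh = f$, and that they are coprime (being built from disjoint pairwise-coprime factors). Finally, computing the image of $\langle g\rangle/\langle f\rangle$ under the isomorphism: $g$ vanishes modulo $f_i$ for $i\in S_2$ and is a unit modulo $f_i$ for $i\in S_1$, so $\langle g\rangle/\langle f\rangle$ is exactly the summand supported on $S_1$, namely $I_1$; symmetrically $\langle h\rangle/\langle f\rangle = I_2$.

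The main obstacle is the forward direction, and specifically the passage from an abstract ideal decomposition to an honest polynomial factorization. This is where the hypothesis $f\in\mathcal J$ is essential: it guarantees both the coprime factorization into local pieces, so that the idempotents are forced to be $0$ or $1$ on each component, and the coprimality needed to match the principal ideals $\langle g\rangle$ and $\langle h\rangle$ with the summands $I_1$ and $I_2$. Without the distinct-roots hypothesis the factors $\mathcal R_{f_i}$ could fail to be local and additional idempotents could arise, breaking the correspondence.
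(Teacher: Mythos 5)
The paper does not prove this lemma --- it imports it as Azumaya's classical result --- so your argument has to be measured against the standard one. Your reverse implication is correct and is the standard computation: comaximality gives $\langle g\rangle\cap\langle h\rangle=\langle gh\rangle=\langle f\rangle$ and $\langle g\rangle+\langle h\rangle=R[x]$, hence the direct sum; and your identification of $I_1,I_2$ with orthogonal idempotents in the forward direction is also fine. The genuine gap is in the scope of the forward implication. The lemma is stated for an \emph{arbitrary} monic $f\in R[x]$ (and that is the generality in which Azumaya's lemma holds and in which it underpins results such as Lemma~1.2, which concerns general monic $f$), whereas your decomposition of $\mathcal R_f$ into a product of local rings indexed by the irreducible factors of $\bar f$ is only available when $\bar f$ is squarefree, i.e.\ when $f\in\mathcal J$. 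Worse, your closing paragraph asserts that the distinct-roots hypothesis is \emph{essential} because otherwise extra idempotents could arise; that claim is false, and it signals that a needed idea is missing rather than merely omitted.

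The missing idea is lifting of idempotents (equivalently, Hensel lifting of coprime factorizations) along the nilpotent ideal $\mathfrak m\mathcal R_f$. Given $\mathcal R_f=I_1\oplus I_2$ with idempotent $e_1$, reduce modulo $\mathfrak m$: in $\mathbb F_q[x]/\langle\bar f\rangle$ the idempotent $\bar e_1$ corresponds to a coprime factorization $\bar f=\bar g_0\bar h_0$ obtained by grouping the \emph{primary components} $\bar p_i^{a_i}$ of $\bar f$ (not its distinct roots). Hensel's lemma lifts this to $f=gh$ with $g,h$ monic and comaximal; your reverse direction then produces an idempotent $e'$ with $\langle g\rangle/\langle f\rangle=e'\mathcal R_f$, and since idempotents are unique modulo a nil ideal ($e-e'=(e-e')^3$ forces $e=e'$ when $e-e'$ is nilpotent), $e'=e_1$ and $I_1=\langle g\rangle/\langle f\rangle$. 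Alternatively, your own architecture survives verbatim if you replace the irreducible factors by the Hensel lifts $g_i$ of $\bar p_i^{a_i}$: each $\mathcal R_{g_i}$ is still local, because $\mathfrak m\mathcal R_{g_i}$ is nilpotent and $\mathcal R_{g_i}/\mathfrak m\mathcal R_{g_i}\cong\mathbb F_q[x]/\langle\bar p_i^{a_i}\rangle$ is local. Either repair closes the gap; as written, your proof establishes only the $\mathcal J$ case and mislocates where the hypothesis actually matters.
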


 An element $e$ of the ring $\mathcal R_f$  is called an idempotent if $e^2=e;$ two idempotents $e_1, e_2$ are said to be orthogonal if $e_1 e_2=0$ and an idempotent is said to be primitive if it is non-zero and cannot be written as the sum of non-zero orthogonal idempotents. A set $\{e_1,...,e_r\}$ of elements of $\mathcal R_f$ is called a complete set of idempotents if $\sum_{i=1}^r e_i=1$.
If $\{e_1,...,e_r\}$ is a complete set of pairwise orthogonal idempotents, we have  that $\mathcal R_f = \bigoplus _{i=1}^r \mathcal R_f e_i $.

\begin{lem}[Theorem 3.2 in \cite{idempotents}] \label{lem:idem} Let  $R$ be a finite local commutative ring and $f$ be a monic polynomial in $R[x]$ such that $f = \prod_{i=1}^{r} f_i$ is the {unique} factorization of $f$ into
a product of monic primary pairwise coprime polynomials. 
 The ring $\mathcal R_f$ admits a unique complete set of primitive pairwise orthogonal idempotents $\{e_1, e_2,...,e_r\}$ given by
\begin{equation}
    e_i= v_i(x) \hat{f}_i(x),~\text{where}~ v_i(x) \in \mathcal R_f ~\text{and}~ \hat{f}_i=\frac{f}{f_i}.
\end{equation}
Moreover $e_i R[x] \cong \frac{R[X]}{\langle f_i\rangle}$ and $\mathcal R_f=\bigoplus_{i=1}^r e_i R[x].$
\end{lem}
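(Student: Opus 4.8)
The plan is to reduce everything to the Chinese Remainder Theorem together with the fact that each factor $R[x]/\langle f_i\rangle$ is local. First I would establish the ring isomorphism $\mathcal R_f \cong \prod_{i=1}^r R[x]/\langle f_i\rangle$. Since the $f_i$ are pairwise coprime monic polynomials they are comaximal in $R[x]$, that is $\langle f_i\rangle + \langle f_j\rangle = R[x]$ for $i\neq j$; iterating Azumaya's Lemma (equivalently, applying CRT directly) yields this decomposition along with the idempotents that realize it. In particular this already gives $\mathcal R_f = \bigoplus_{i=1}^r e_i R[x]$ and the identifications $e_i R[x] \cong R[x]/\langle f_i\rangle$.

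To obtain the explicit form, note that $f_i$ and $\hat f_i = f/f_i$ are coprime, because $f_i$ is comaximal with every $f_j$ for $j\neq i$ and $\hat f_i$ is the product of those $f_j$, so $\langle f_i\rangle + \langle \hat f_i\rangle = R[x]$. Hence there exist $a_i, v_i \in R[x]$ with $a_i f_i + v_i \hat f_i = 1$. Setting $e_i = v_i \hat f_i \bmod f$ produces an element that is $\equiv 1 \pmod{f_i}$ and $\equiv 0 \pmod{f_j}$ for $j\neq i$, since $f_j \mid \hat f_i$. Read through the CRT isomorphism, $e_i$ is exactly the standard idempotent supported on the $i$-th factor, so $e_i^2 = e_i$, $e_i e_j = 0$ for $i\neq j$, and $\sum_{i=1}^r e_i = 1$; this also gives the stated shape $e_i = v_i(x)\hat f_i(x)$.

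Next I would prove primitivity. Because $f_i$ is primary, the finite ring $R[x]/\langle f_i\rangle$ is local, and a local ring has no idempotents besides $0$ and $1$: if $e^2 = e$ then $e(1-e) = 0$, and since one of $e, 1-e$ is a unit, $e\in\{0,1\}$. Transporting through $e_i R[x]\cong R[x]/\langle f_i\rangle$, the idempotent $e_i$ cannot be written as a sum of two nonzero orthogonal idempotents, hence it is primitive. For uniqueness, suppose $\{e_1',\dots,e_s'\}$ is another complete set of pairwise orthogonal primitive idempotents. Then $e_j' = e_j'\sum_i e_i = \sum_i e_j' e_i$ is a sum of pairwise orthogonal idempotents, so primitivity of $e_j'$ forces exactly one summand to be nonzero, say $e_j' e_i = e_j'$. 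Expanding $e_i = \sum_l e_i e_l'$ and using primitivity of $e_i$ together with $e_i e_j' = e_j'\neq 0$ then yields $e_i = e_j'$, so the two sets coincide.

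The main points requiring care are the comaximality of the $f_i$ in $R[x]$, so that CRT applies over the ring $R$ rather than over a field, and the passage from $f_i$ being primary to $R[x]/\langle f_i\rangle$ being local; once these are secured, the idempotent, orthogonality, completeness, and decomposition claims are formal consequences of the CRT identification, and uniqueness follows from the atom structure of the Boolean algebra of idempotents of $\mathcal R_f$.
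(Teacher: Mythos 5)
The paper offers no proof of this lemma at all: it is imported verbatim as Theorem 3.2 of the cited reference on idempotents, so there is no in-paper argument to compare against. Your CRT-based proof is the standard one and is essentially correct: pairwise coprimality of monic polynomials over a commutative ring amounts to comaximality of the ideals $\langle f_i\rangle$ (either by definition, or by lifting coprimality of the reductions $\bar f_i$ through the nilpotent maximal ideal $\mathfrak m$), which yields both $\langle f\rangle=\bigcap_i\langle f_i\rangle$ and the B\'ezout relations $a_if_i+v_i\hat f_i=1$ producing the explicit idempotents $e_i=v_i\hat f_i$; the orthogonality, completeness and the decomposition $\mathcal R_f=\bigoplus_i e_i\mathcal R_f$ are then formal. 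Two steps that you flag but do not discharge deserve one line each. First, the passage from ``$f_i$ primary'' to ``$R[x]/\langle f_i\rangle$ is local'': since this quotient is a finite commutative ring, every element is either a unit or a zero divisor, and primariness of $\langle f_i\rangle$ makes every zero divisor nilpotent, so every element is a unit or nilpotent and the ring is local. Second, when transporting primitivity through $e_iR[x]\cong R[x]/\langle f_i\rangle$, note that if $e_i=u+v$ with $u,v$ orthogonal idempotents then $u=e_iu$ and $v=e_iv$, so the putative decomposition already lives inside $e_i\mathcal R_f$, where the only idempotents are $0$ and the identity $e_i$. With those two remarks supplied, your argument for existence, the explicit form of the $e_i$, primitivity, and uniqueness is complete and matches the standard proof of the cited result.
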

 %with maximal ideal $\mathfrak{m}$ and $\mathbb{F}=R/\mathfrak{m}$ its residue field and $\bar\cdot : R\mapsto \mathbb{F}$ the canonical map. This maps extends component-wise to $R[x]$ the ring of polynomials with coefficients in $R$. 

\subsection{Circulant matrices}

 %and $\overline{f(x)}$ has simple roots over an extension of $\mathbb{F}$ (see \cite{MR06} for an account on serial codes over chain rings).
We will denote by $\mathcal{M}_n( R)$ the set of $n\times n$ matrices over the local ring $R$. If $\mathrm{deg}f(x)=n$,  $E_f\in \mathcal{M}_n(R)$ will be the companion matrix associated with $f(x)=x^n -\sum_{i=0}^{n-1} f_i x^i$,
\begin{equation}
    E_f=\left( \begin{array}{ccccc}
     0    & 1 &0  & \cdots& 0  \\
      0   & 0& 1 & \cdots & 0\\   \vdots&  \vdots &   \vdots& \ddots  & \vdots \\
      0&  0 & 0 & \cdots  &1\\
     f_0 & f_1 &  f_2 & \cdots  & f_{n-1}\end{array}\right).
\end{equation}
Consider the usual matrix multiplication in $\mathcal{M}_n(R)$ and the ordinary product in $\mathcal{R}_f.$ Consider the basis $\mathcal B=\{1,x,x^2,\ldots,x^{n-1}\} $ for $\mathcal R_f$
and let the map $\rho_f: \mathcal R_f\to R^n$ send a polynomial to coefficients of $x^i$. The map $M: \mathcal{R}_f \rightarrow \mathcal M_n(R)$ defined  by
$$M(g(x))=
\begin{bmatrix}
 \rho_f(g(x))\\
 \rho_f(xg(x))\\
 \vdots\\
 \rho_f(x^{n-1}g(x))
\end{bmatrix}
$$
is the regular representation of elements $\mathcal R_f.$ If we denote the image of $M$ by  $\mathcal{M}_n(R,f),$ then $M:\mathcal R_f\to \mathcal{M}_n(R,f) $ is a ring isomorphism.
Clearly $M(x)=E_f$ and hence  set $\{\mathrm{Id}, E_f, E_f^2,\ldots, E_f^{n-1}\}$ is a basis for  $\mathcal{M}_n(R,f)$, in fact the   elements of  $\mathcal{M}_n(R,f)$ are   linear combination of powers of the companion matrix $E_f$.
 This isomorphism has been extensively studied in  \cite{Vel13}.  Note  that elements $\mathcal{M}_n(R,f)$  are called Barnett matrices in \cite{Vel13},   $f(x)$-circulants in  \cite{CH95} or  polycirculant matrices in \cite{Poly}.
%for each $g\in \mathcal R_f$ we  define the linear transformation $T_g: \mathcal R_f \rightarrow \mathcal R_f,$  mapping  $k \mapsto gk$ for all $k\in \mathcal R_f$. Let $\mathcal B=\{1,x,x^2,\ldots,x^{n-1}\} $ be a basis for $\mathcal R_f$, for every $g\in \mathcal R_f$ define matrix $[g_{ij}]$ as the transpose of the representation matrix of $T_g$ with respect to the basis $\mathcal B$. It is clear that the map $M: \mathcal{R}_f \rightarrow \mathcal M_n(R)$ by $M(g)=[g_{ij}]$ is a ring homomorphism and 
%we will denote as   $\mathcal{M}_n(R,f)$  its image.  This morphism has been extensively studied in  \cite{Vel13}, the   elements of  $\mathcal{M}_n(R,f)$ are   linear combination of powers of the companion matrix $E_f$. Note  that these matrices are called Barnett matrices in \cite{Vel13},   $f(x)$-circulants in  \cite{CH95} or  polycirculant matrices in \cite{Poly}.
%Thus the  set $\{\mathrm{Id}, E_f, E_f^2,\ldots, E_f^{n-1}\}$ is a basis for  $\mathcal{M}_n(R,f)$ and if $g=g_0+g_1x+\ldots+g_{n-1}x^{n-1},$ then  
%\begin{equation}\label{mapvel}\begin{array}{cccc}
%   M:&  \mathcal{R}_f &\longrightarrow &  \mathcal M_n(R,f) \\[0.5em]
 %    & g(x) & \mapsto & g(E_f)=g_0 E_f^0+g_1E_f+\ldots+g_{n-1}E_f^{n-1},
%\end{array}
%\end{equation}
%where $E_f^0=I_n,$ is a ring isomorphims where   $M(x)=E_f$.
The following characterization of the subrings of  $\mathcal{M}_n(R)$  being images of such an ismorphims can be found in \cite{Vel13}.
\begin{lem}[ Theorem 2.1~\cite{Vel13}]\label{centralizer}
A subring $S$ of $\mathcal{M}_n(R)$ is of the form $\mathcal{M}_n(R,f)$ if and only if $S=C_{\mathcal{M}_n(R)}(E_f)$, the centralizer of he matrix $E_f$ in $\mathcal{M}_n(R)$.
\end{lem}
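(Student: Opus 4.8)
\emph{Proof plan.} Since both $\mathcal{M}_n(R,f)$ and $C_{\mathcal{M}_n(R)}(E_f)$ are fixed subsets of $\mathcal{M}_n(R)$, the stated equivalence for a subring $S$ is nothing more than the set equality $\mathcal{M}_n(R,f)=C_{\mathcal{M}_n(R)}(E_f)$: once this is established, $S=\mathcal{M}_n(R,f)$ holds exactly when $S=C_{\mathcal{M}_n(R)}(E_f)$. So the plan is to prove this equality by double inclusion. The inclusion $\mathcal{M}_n(R,f)\subseteq C_{\mathcal{M}_n(R)}(E_f)$ is immediate: by the remarks preceding the lemma, every element of $\mathcal{M}_n(R,f)$ is an $R$-linear combination $\sum_{i=0}^{n-1}a_iE_f^{\,i}$ of powers of $E_f$, and any polynomial in $E_f$ commutes with $E_f$.

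The substance is the reverse inclusion $C_{\mathcal{M}_n(R)}(E_f)\subseteq \mathcal{M}_n(R,f)$, and the key idea is that $E_f$ admits a cyclic vector. First I would record that the row vector $u=\rho_f(1)=(1,0,\dots,0)$ satisfies $uE_f^{\,i}=\rho_f(x^i)$ for $0\le i\le n-1$: indeed $M(x)=E_f$, and directly from the definition of $M$ one has $\rho_f(g)\,M(h)=\rho_f(gh)$, so right multiplication by $E_f$ on the coefficient row of $g$ implements multiplication by $x$ in $\mathcal{R}_f$. Consequently $\{u,uE_f,\dots,uE_f^{\,n-1}\}$ is exactly the standard basis $\{\rho_f(1),\rho_f(x),\dots,\rho_f(x^{n-1})\}$ of $R^n$, hence a free $R$-basis.

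Now I would take any $A\in C_{\mathcal{M}_n(R)}(E_f)$ and expand the row vector $uA$ in this basis, writing $uA=\sum_{i=0}^{n-1}a_i(uE_f^{\,i})=u\,p(E_f)$ with $p(E_f)=\sum_{i=0}^{n-1}a_iE_f^{\,i}\in\mathcal{M}_n(R,f)$ and uniquely determined $a_i\in R$. To conclude $A=p(E_f)$ I would test both matrices against the basis: since $AE_f=E_fA$ gives $E_f^{\,j}A=AE_f^{\,j}$, and $p(E_f)$ commutes with $E_f$ as well, $(uE_f^{\,j})A=uAE_f^{\,j}=u\,p(E_f)E_f^{\,j}=(uE_f^{\,j})\,p(E_f)$ for every $j$. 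Thus $w\bigl(A-p(E_f)\bigr)=0$ for every $w$ in a basis of $R^n$, so $A-p(E_f)$ annihilates all of $R^n$ on the left and therefore $A=p(E_f)\in\mathcal{M}_n(R,f)$.

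The main obstacle to anticipate is that $R$ is only a ring, not a field, so the usual field argument (the centralizer of a non-derogatory matrix has dimension $n$, which forces equality with the $n$-dimensional algebra $R[E_f]$) is unavailable. The cyclic-vector argument above sidesteps dimension counting entirely and uses only that $\{uE_f^{\,i}\}$ is a free basis, so it goes through verbatim over the commutative ring $R$ (and in fact for any monic $f$, without invoking $f\in\mathcal J$).
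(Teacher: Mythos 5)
Your argument is correct. Note that the paper offers no proof of this lemma at all --- it is quoted verbatim as Theorem~2.1 of the cited reference [Vel13] --- so there is no in-paper argument to compare against; what you have done is supply a self-contained proof of the citation. Your reading of the statement as the set equality $\mathcal{M}_n(R,f)=C_{\mathcal{M}_n(R)}(E_f)$ is the right one, and the cyclic-vector argument goes through: the identity $\rho_f(g)\,M(h)=\rho_f(gh)$ follows directly from the definition of $M$ by $R$-linearity of $\rho_f$, so $uE_f^{\,i}=\rho_f(x^i)$ really is the standard basis, and the computation $(uE_f^{\,j})\bigl(A-p(E_f)\bigr)=0$ for all $j$ forces every row of $A-p(E_f)$ to vanish. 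Your closing remark is also worth keeping: the proof uses only that $f$ is monic and $R$ is commutative, so the lemma does not depend on the standing hypothesis $f\in\mathcal J$, and it sidesteps the dimension-counting argument that would be unavailable over a ring that is not a field.
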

\noindent This fact plays a central role for the diagonalization of commuting matrices in the field case \cite{CH95}.
We will denote by $\mathcal M_{1, n}(R,f)$  the set of all $1\times n$ matrices $[a_0,\,a_1\ldots\, a_{n-1}]$ endowed with the following multiplication 
\begin{equation}\label{multiplication}
    [a_0,\,a_1\ldots\, a_{n-1}]\cdot[b_0,\,b_1\ldots\, b_{n-1}]=[a_0,\,a_1\ldots\, a_{n-1}]M(b),
\end{equation}
where $b=b_0+b_1x+\ldots+b_{n-1}x^{n-1}.$
Since every element of $\mathcal M_n(R,f) $ is determined by its first row and polynomial $f(x)$,  the map $\varphi: \mathcal M_n(R, f)\rightarrow (\mathcal M_{1, n}(R,f),\cdot)$,  which sends every matrix to the first row is a ring isomorphims.
%The ideals of quotient ring $\mathcal R_f$ are called polycyclic codes over ring $R$(Note: in Chillag paper, they are called $f(x)$-codes). Since $\overline{f(x)}$ is simple roots over $\mathbb{F},$ every polycyclic code is generated by a polynomial $g(x)\in \mathcal{R}_f.$ 
%****************************************
\section{The Discrete Fourier Transform  over commutative rings}\label{sec:DFT}
%\subsection{DFT for cyclic codes over rings}
We assume that the reader is familiar with the  Discrete Fourier Transform (DFT) over finite fields and its applications to cyclic codes (see \cite{MS77} for example).  Suppose that   $\xi$  is a  primitive $N^{th}$ root of unity in a field $\mathbb F,$ i.e,  $\xi^N=1$ and $\xi\neq 1$ for $i=1,\ldots, N-1.$ For any integer $j,$
\begin{equation} \label{cron}
\sum\limits_{i = 0}^{N - 1} {{\xi ^{ij}}}=\begin{cases}
			N, &  j=0(\text{mod N}),\\
            0, & \text{otherwise}.
		 \end{cases}
\end{equation}
and the DFT of length $N$ generated by $\xi$ is the  mapping $DFT_{\xi}$ from $\mathbb F^N$ to $\mathbb F^N$ defined by $B=DFT_\xi(b),$ where $B_i=\sum_{n=0}^{N-1}b_n\xi^{in}$ for $i=0,1,\ldots ,N-1$ or equvalently $B=bM_{\xi},$ where 
$$M_{\xi}=\begin{bmatrix}
1 & 1 & 1 &  \ldots & 1\\
1 & \xi & {\xi}^2 & \ldots & {\xi}^{N-1}\\
1 & {\xi}^2 & {\xi}^{2.2} & \ldots & {\xi}^{2(N-1)}\\
\vdots &\vdots &\vdots & \ldots &\vdots\\
1 & {\xi}^{N-1}& {\xi}^{(N-1)2} &\ldots & {\xi}^{(N-1)(N-1)}
\end{bmatrix}.$$
Thus $M_{\xi}$ is a    Vandermonde matrix with determinant $\prod\limits_{j=1}^{N-1}\prod\limits_{i=1}^{j-1}(\xi^j-\xi^i)$  which is non-zero and hence   $M_{\xi}$ is non-singular  and the  DFT is always invertible. The  inverse transform of DFT is given by
\begin{equation}\label{invers}
    b_n=\frac{1}{N}\sum\limits_{i=0}^{N-1}B_i\xi^{-in},\quad \text{for}\quad   n=0,1,\ldots, N-1.
\end{equation}
%The  primitive $N^{th}$ root of unity $\xi$ and DFT over a ring are defined as a field. 
Note that a matrix over a ring is non-singular if and only if its determinant is a unit in the ring \cite{WI93}. Moreover, a  product of elements of a ring is unit if and only if each element is unit. Therefore the following  theorem holds.
\begin{theo}\label{DFTring}(DFT   over rings \cite[Theorem 10]{A12})
If $\xi$ is a primitive $N^{th}$ root of unity in a commutative ring $R.$ Then the  DFT from $R^N$ to $R^N$ defines an invertible mapping  whose invers is given by the  Equation \eqref{invers} if and only if $\xi^k-1$ is a unit of $R$ for $k=1,2,\ldots,N-1.$
\end{theo}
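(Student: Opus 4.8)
The plan is to reduce everything to the two ring-theoretic facts stated just before the theorem: that a square matrix over a commutative ring is invertible precisely when its determinant is a unit, and that a finite product is a unit precisely when each factor is one. Since the DFT is the linear map $b \mapsto b M_\xi$, it is invertible if and only if $M_\xi$ is invertible, i.e. if and only if $\det M_\xi$ is a unit of $R$.

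First I would exploit the Vandermonde shape of $M_\xi$. Its determinant is $\prod_{0 \le i < j \le N-1}(\xi^j - \xi^i)$, and each factor can be written as $\xi^i(\xi^{j-i}-1)$. Since $\xi^N = 1$ forces $\xi$ (hence every $\xi^i$) to be a unit, the factor $\xi^j - \xi^i$ is a unit if and only if $\xi^{j-i}-1$ is. As $(i,j)$ ranges over $0 \le i < j \le N-1$ the exponent $j-i$ runs through $\{1,\ldots,N-1\}$, so by the product-of-units criterion $\det M_\xi$ is a unit if and only if $\xi^k - 1$ is a unit for every $k = 1,\ldots,N-1$. This already delivers the equivalence between invertibility of the DFT and the stated unit condition in both directions.

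It then remains to check that, under the unit condition, the inverse is the specific map \eqref{invers}, i.e. that $M_\xi^{-1} = \frac{1}{N} M_{\xi^{-1}}$. For this I would first verify the orthogonality relation \eqref{cron} over $R$: the telescoping identity $(\xi^j - 1)\sum_{i=0}^{N-1}\xi^{ij} = \xi^{Nj}-1 = 0$ together with $\xi^j-1$ being a unit for $j \not\equiv 0 \pmod N$ forces $\sum_{i}\xi^{ij}=0$, while for $j \equiv 0 \pmod N$ the sum is $N$. Computing the $(m,n)$ entry of $M_\xi M_{\xi^{-1}}$ as $\sum_{i=0}^{N-1}\xi^{i(m-n)}$ and reducing $m-n$ modulo $N$ then gives $M_\xi M_{\xi^{-1}} = N\,\mathrm{Id}$.

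The main obstacle is showing that $N$ itself is a unit of $R$, which is what makes \eqref{invers} meaningful; this does not follow formally from $M_\xi M_{\xi^{-1}} = N\,\mathrm{Id}$ alone, so I would argue it from the unit condition directly. Applying the factor theorem repeatedly to $x^N-1$ (legitimate because the pairwise differences $\xi^k - \xi^l$ are units) yields $x^N - 1 = \prod_{k=0}^{N-1}(x-\xi^k)$; cancelling the monic non-zero-divisor $x-1$ gives $1 + x + \cdots + x^{N-1} = \prod_{k=1}^{N-1}(x-\xi^k)$, and evaluating at $x=1$ expresses $N$ as the product $\prod_{k=1}^{N-1}(1-\xi^k)$ of units, hence a unit. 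With $N$ invertible, dividing $M_\xi M_{\xi^{-1}} = N\,\mathrm{Id}$ by $N$ confirms that \eqref{invers} is indeed the inverse transform, completing the biconditional.
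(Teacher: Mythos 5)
Your proof is correct and follows essentially the same route as the paper, which justifies the theorem by observing that the map is given by the Vandermonde matrix $M_\xi$, invertible exactly when its determinant $\prod_{i<j}(\xi^j-\xi^i)$ is a unit, i.e.\ when each factor $\xi^i(\xi^{j-i}-1)$ is. Your extra step showing $N=\prod_{k=1}^{N-1}(1-\xi^k)$ is a unit---so that the stated inverse formula is actually meaningful---fills in a detail the paper delegates to the cited reference.
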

For example, $\xi=2$ is a primitive $4^{th}$ root of unity in $\mathbb Z_{15}$ but $\xi^2-1$ is not unit. So for $\xi=2,$  DFT of lenght $N=4$ is not invertible. There are  some results for DFT over $\mathbb Z_m,$ (Number Theory Transform in  \cite{M1998}).
It follows from above theorem that if $\xi$ generates an invertible  DFT of length $N$ in ring $R$ and $L(>1)$ is a divisor $N,$ then $\xi^{\frac{N}{L}}$ also generates an invertible  DFT of length $L$ in $R$ (see \cite{M1998}).  
%\begin{theo}(Theorem 3 in \cite{M1998}) Let $m=p_1^{e_1}\ldots p_k^{e_k},$ where $p_1,p_2,\ldots p_k$ are distinct primes and $e_1,\ldots e_k$ are positive integers.Then there exsists a DFT of length $N$ over $\mathbb Z_m$ if and only if $N$ is a divisor of $gcd(p_1-1,p_2-1,\ldots,p_k-1).$
%\end{theo}
 %Let $R$ be a  commutative  ring and $g(x)\in R[x]/\langle x^N-1\rangle$ be a polynomial. Denote $B(x)=\sum\limits_{i=0}^{N-1}B_ix^{N-i}=\sum\limits_{i=0}^{N-1}g(\xi^i)x^{N-i}.$ The Mattsom-Solomon polynomial associated with $g(x)$ is a linear transformation $MS$ from $g(x)$ to $B(x).$ Clearly, if $DFT$ is invertible then  linear transformation $MS$ is invertible. Using Equation \eqref{invers}, we have $g(x)=MS^{-1}(B(x))=\frac{1}{N}\sum\limits_{i=0}^{N-1}B(\xi^i)x^i$(see \cite{MS77} Chapter 8, Theorem 20).
\subsection{Vandermonde matrices over  commutative rings}
 Let $R$ be the local ring with extension $R'$ and $R$-algebra morphism $\gamma: R \rightarrow R'.$ A matrix $M\in \mathcal{M}_n(R)$ is diagonizable over $R'$ if there are matrices $V, D\in \mathcal{M}_n(R')$  such that $D$ is  diagonal and $V^{-1}\gamma (M)V=D$,  see  \cite{La13}.  From now on, for convenience we simply write $V^{-1}MV=D$. 
Consider $f(x)\in R[x]$ such that  $f(x)=\prod_{i=1}^n (x-\alpha_i) \in R^\prime[x]$, i.e. $f(x)$ splits in  $R^\prime$. Also
consider the following Vandermonde matrix  
$$
V=V(\alpha_1, \ldots ,\alpha_{n})=
\begin{bmatrix}
1 & 1 & \ldots & 1\\
\alpha_1 & \alpha_2 & \ldots &\alpha_n\\
\alpha_1^2& \alpha_2^2&\ldots & \alpha_n^2\\
\vdots&\vdots&\ldots&\vdots\\
\alpha_1^{n-1}& \alpha_2^{n-1}&\ldots&\alpha_n^{n-1}
\end{bmatrix}.
$$
For $i=1,\ldots, n,$ denote $i^{th}$ column of $V$ as  $V_i$  and for  $j=1,\ldots, n,$ and  given $g\in\mathcal R_f$  denote $j^{th}$ row of $M=M(g)$ by  $M_j.$ We know that entries of $M_j$ are the coefficients of $x^{j-1}g(x)(\mod{f})$ and henceforth  $M_jV_i=\alpha_i^{j-1}g(\alpha_i).$ Therefore $MV_i=g(\alpha_i)V_i,$ that is  $V_i$ is  the eigenvector and $g(\alpha_i)$ is the eigenvalue of $M$ in $R'.$ So $MV=\mathrm{diag}[g(\alpha_1),g(\alpha_2),\ldots,g(\alpha_{n})]V.$
Now  If $V$ is non-singular then $V^{-1}MV=\mathrm{diag}[g(\alpha_1),g(\alpha_2),\ldots,g(\alpha_{n})],$ and hence $M$ is diagonalized by $V.$ 
%Note that for matrices over rings, eigenvectors related to distinct eigenvalues are not required to be linearly independent(\cite {WI93} page 219)). 
We know that $\det V=\prod_{j=1}^{n-1}\prod_{i=1}^{j-1}(\alpha_j-\alpha_i)$ is in local ring $R'.$ It is well-known that for a local ring  $\alpha_j-\alpha_i(i\neq j)$ is a unit if and only if $\bar \alpha_j\neq \bar \alpha_i$, see \cite{NS00}. 
So $V$ is non-singular if and only if  $\bar \alpha_j\neq \bar \alpha_i,$ for all $i\neq j.$
Note that if $f(x)\in \mathcal J\subseteq R[x]$   then $V$ is non-singular. Moreover, for a    non-singular matrix $A$ over a local ring then the homogeneous system $Ax=0$ has a unique solution \cite[Lemma 2.1]{NS00}.  The following result provides us $V^{-1}$.
%(this result its more general and can be proved for finite commutative Frobenius rings \cite[Proposition 2.9]{SL14}).

%It is well-know that  Let $\alpha_1,\ldots \alpha_n$ are representatives of $n$ different classes mod the  maximal ideal $\mathfrak{m}.$ Then all  differences $\alpha_i-\alpha_j$ are units and hence determinant of $V$ is unit. For example, in $\mathbb Z_{16}$ and $n=3$ we can not find such $alpha_is,$ however in $\mathbb Z_{27}$ and $n=3$ there are(is this argument true?yes by lemma 2.5\cite{NS00}).\\
%Suppose that $g(x)=g_0+g_1x+\ldots+g_{n-1}x^{n-1}\in \mathcal R_f$ and $g=(g_0,g_1,\ldots,g_{n-1})$ is its associated vector.  Discrete Fourier transform of $g$  is the vector $B=(B_0,B_1,\ldots,B_{n-1}),$ where $B_i=g(\alpha_i)$ for $1\leqslant i\leqslant n-1,$ or equivalently, $B=gV.$\\

\begin{lem}[\cite{R19}]
If $V^T$ is non-singular, then $(V^T)^{-1}=(w_{ij})$ is given by 
$$(w_{ij})=(-1)^{i+j}\frac{S_{n-i,j}}{\prod\limits_ {l<k}^{n}(\alpha_k-\alpha_l)}$$
with $l=j$ or $k=j$ and $S_k=S_k(\alpha_1,\ldots,\alpha_n)=\sum\limits_{1\leqslant i_1<\ldots<i_k\leqslant n}\alpha_{i_1}\ldots \alpha_{i_k},$ and $S_0(\alpha_1,\alpha_2,\ldots,\alpha_{n})=1$ and $S_{k,j}=S_k(\alpha_1,\ldots,\alpha_{j-1},\alpha_{j+1},\ldots,\alpha_n).$
\end{lem}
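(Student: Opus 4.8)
The plan is to prove the formula not by inverting $V^T$ from scratch but by verifying directly that the proposed matrix $W=(w_{ij})$ satisfies $V^T W=\mathrm{Id}$; since $V^T$ is non-singular over the local ring $R'$, a right inverse is automatically the two-sided inverse, so this suffices. First I would record that the hypothesis makes every entry of $W$ a genuine element of $R'$: non-singularity of $V^T$ means $\det V=\prod_{l<k}(\alpha_k-\alpha_l)$ is a unit, and since a product is a unit exactly when each factor is, every difference $\alpha_k-\alpha_l$ is a unit. Hence the denominator appearing in $w_{ij}$ (the sub-product of those differences with $l=j$ or $k=j$) is a unit and division by it is legitimate in $R'$.

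The conceptual engine is Lagrange interpolation. I would recognise the quantities $w_{kj}$, up to sign and denominator normalisation, as the coefficients of the degree-$(n-1)$ polynomial $\prod_{m\neq j}(x-\alpha_m)$, whose expansion is $\sum_{s=0}^{n-1}(-1)^s S_{s,j}\,x^{\,n-1-s}$ by definition of the elementary symmetric functions $S_{s,j}=S_s(\alpha_1,\dots,\alpha_{j-1},\alpha_{j+1},\dots,\alpha_n)$. Before substituting I would reconcile the denominator: the product over $l<k$ with $l=j$ or $k=j$ equals $(-1)^{\,n-j}\prod_{m\neq j}(\alpha_j-\alpha_m)$, because reversing the $n-j$ factors carrying an index $>j$ introduces exactly that sign. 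Combined with the prefactor $(-1)^{i+j}$, this rewrites the claimed entry as
$$
w_{ij}=(-1)^{\,n-i}\,\frac{S_{n-i,j}}{\prod_{m\neq j}(\alpha_j-\alpha_m)} .
$$

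With this normalisation the verification is short. Fixing $i,j$ and using $(V^T)_{ik}=\alpha_i^{\,k-1}$, I would compute
$$
\sum_{k=1}^{n}(V^T)_{ik}\,w_{kj}
=\frac{1}{\prod_{m\neq j}(\alpha_j-\alpha_m)}\sum_{k=1}^{n}(-1)^{\,n-k}S_{n-k,j}\,\alpha_i^{\,k-1},
$$
then reindex by $s=n-k$ to identify the inner sum with $\sum_{s=0}^{n-1}(-1)^s S_{s,j}\,\alpha_i^{\,n-1-s}=\prod_{m\neq j}(\alpha_i-\alpha_m)$. For $i\neq j$ this product contains the factor $(\alpha_i-\alpha_i)=0$, while for $i=j$ it equals the denominator; hence the whole expression is $\delta_{ij}$, i.e. $V^T W=\mathrm{Id}$, as required.

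The only real obstacle is bookkeeping: matching the compact form $(-1)^{i+j}$ together with the restricted product ``$l=j$ or $k=j$'' against the natural Lagrange normalisation $\prod_{m\neq j}(\alpha_j-\alpha_m)$, and tracking the reindexing $s=n-k$ so that the signs line up. No step uses more than the fact that each $\alpha_k-\alpha_l$ is a unit, so the classical field argument transfers verbatim to the local ring $R'$. An equivalent route, which I would mention as a sanity check, is to observe that $w_{ij}$ is precisely the coefficient of $x^{\,i-1}$ in the Lagrange basis polynomial $\ell_j(x)=\prod_{m\neq j}\frac{x-\alpha_m}{\alpha_j-\alpha_m}$, so that $V^T W=\mathrm{Id}$ is just the interpolation identity $\ell_j(\alpha_i)=\delta_{ij}$ read off coefficientwise.
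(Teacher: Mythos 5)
Your proof is correct. Note that the paper itself gives no argument for this lemma --- it is stated with a citation to the reference \cite{R19} --- so there is no in-paper proof to compare against; your write-up therefore serves as a self-contained verification. The route you take (checking $V^T W=\mathrm{Id}$ via the expansion $\prod_{m\neq j}(x-\alpha_m)=\sum_{s=0}^{n-1}(-1)^s S_{s,j}x^{n-1-s}$ and the interpolation identity $\ell_j(\alpha_i)=\delta_{ij}$) is the standard one, and your bookkeeping is right: the restricted product with $l=j$ or $k=j$ equals $(-1)^{n-j}\prod_{m\neq j}(\alpha_j-\alpha_m)$, which turns the prefactor $(-1)^{i+j}$ into $(-1)^{n-i}$, and the reindexing $s=n-k$ then collapses the row-times-column sum to $\prod_{m\neq j}(\alpha_i-\alpha_m)\big/\prod_{m\neq j}(\alpha_j-\alpha_m)=\delta_{ij}$. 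The two ring-theoretic points needed to transfer the field argument are also in place: each difference $\alpha_k-\alpha_l$ is a unit because the full Vandermonde determinant is (a product in a commutative ring is a unit iff each factor is), and a one-sided inverse of a matrix with unit determinant over a commutative ring is two-sided, so checking $V^TW=\mathrm{Id}$ suffices.
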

%Therefore if $V$ in non singular with $V^{-1}=(v_{ij}^{-1})$ and $DFT(g)=B$, then we can invert the transform
%$$g_{j-1}=\sum\limits_{k=1}^{n}B_{k-1}v_{kj}^{-1},\quad 1\leqslant j\leqslant n.$$
\begin{ex}
Let $n=3$ and
%\begin{equation*}
% V= \begin{bmatrix}
%  1&1&1\\
%  \alpha_1&\alpha_2&\alpha_3\\
%  \alpha_1^2&\alpha_2^2&\alpha_3^2
%    \end{bmatrix}.
%  \end{equation*}
%  Then
%    \begin{equation*}
% V^{-1}= \begin{bmatrix}
%  \frac{\alpha_2\alpha_3}{(\alpha_2-\alpha_1)(\alpha_3-\alpha_1)} &  
%  -\frac{\alpha_2+\alpha_3}{(\alpha_2-\alpha_1)(\alpha_3-\alpha_1)} &
%  \frac{1}{(\alpha_2-\alpha_1)(\alpha_3-\alpha_1)}\\
% - \frac{\alpha_1\alpha_3}{(\alpha_3-\alpha_2)(\alpha_2-\alpha_1)} &  
%  \frac{\alpha_1+\alpha_3}{(\alpha_3-\alpha_2)(\alpha_2-\alpha_1)} &
%  -\frac{1}{(\alpha_3-\alpha_2)(\alpha_2-\alpha_1)}\\
%  \frac{\alpha_1\alpha_2}{(\alpha_3-\alpha_2)(\alpha_3-\alpha_1)} &  
%  -\frac{\alpha_1+\alpha_2}{(\alpha_3-\alpha_2)(\alpha_3-\alpha_1)} &
%  \frac{1}{(\alpha_3-\alpha_2)(\alpha_3-\alpha_1)}\\
%      \end{bmatrix}.
%\end{equation*}
$f(x)=x^3+5x+3\in\mathbb Z_9[x].$ Then $f(x)=(x-1)(x-12)(x-23)\in\mathbb Z_{27}[x].$
 Moreover $\det(V)=16$ is unit in $\mathbb Z_{27}.$ We have 
 \begin{equation*}
 V^{-1}=
      \begin{bmatrix}
21 & 8 & 26\\
19 & 6 & 2\\
15 & 13 & 26\\
\end{bmatrix}.
 \end{equation*}
% So if $g(x)=g_0+g_1x+g_2x^2\in \mathbb Z_9[x]$ and $DFT(g)=(B_0,B_1,B_2)$ then 
 %\begin{equation*}
%     g_0=12B_0+19B_1+24B_2, g_1=26B_0+15B_1+13B_2, g_2=17B_0+20B_1+17B_2.
  %    \end{equation*}
  \end{ex}
%********************************
%where $.$ denotes ordinary polynomial multiplication modulo  $f(x)$ and   $\star$ denotes the component-wise multiplication(Schur product).   The map in (\ref{eq:ms}) is called the \emph{Mattson-Solomon transform} with respect to the polynomial $f(x)$.\\

\section{Mattson-Solomon transform and  polycyclic codes over rings}\label{S:MS}

From now on we will be concerned with univariate polycyclic codes defined as ideals of the ambient space $\mathcal{R}_f$, where  $f(x)$ a monic   polynomial in $\mathcal J\subseteq R[x]$.
For a local ring the diagonalizing of $M$ is unique up to permutation of diagonal entries. So let $\{\alpha_1,\ldots ,\alpha_n\}$ be a fixed ordering of roots of $f(x)$ in extension ring $R'$ of $R.$
The map
\begin{equation}\label{eq:ms}\begin{array}{cccc}
   MS_f:&  (\mathcal{R}_f, \cdot ) &\longrightarrow & (R^\prime[x]/\langle f(x)\rangle, \star) \\[0.5em]
     & g(x) & \mapsto & \displaystyle  \sum_{i=1}^{n} g(\alpha_i) x^{i-1},
\end{array}
\end{equation}
is a ring homormorphism, 
where $\cdot$ denotes ordinary polynomial multiplication modulo  $f(x)$ and   $\star$ denotes the component-wise multiplication or Schur product.  We will call  the map in (\ref{eq:ms})  the \emph{Mattson-Solomon transform} with respect to the polynomial $f(x)$. Indeed in the case $f(x)=x^n-1$ we recover the Fourier transform  in the previous section. Since $f(x)\in \mathcal J$,  the  Vandermonde matrix $V$ is non-singular and hence the    homomorphism  $MS_f$ is injective. Take $V^{-1}=(u_{ij})$ and   $g(x)=g_0+g_1x+\ldots+g_{n-1}x^{n-1}\in \mathcal R_f$ and denote $g=(g_0,g_1,\ldots,g_{n-1}).$ If we  denoted by $B=MS_f(g),$ then $B_i=g(\alpha_i)$ and  the inverse formula is given by
\begin{equation*}
   g_{j-1}=\sum\limits_{k=1}^{n}B_{k-1}u_{kj},\quad 1\leqslant j\leqslant n,
\end{equation*}
 i.e.  
$
  g(x)=\sum\limits_{j=1}^n \sum\limits_{k=1}^{n}B_{k-1}u_{kj}x^{j-1}.  
$

\begin{ex}[Example 1 Cont.] Let $g(x)=g_0+g_1x+g_2x^2\in \mathbb Z_9[x]$ and $MS_f(g)=(B_0,B_1,B_2)$ then the inverse of the Mattson-Solomon transform is
 \begin{equation*}
    g_0=21B_0+19B_1+15B_2,\, \, g_1=8B_0+6B_1+13B_2,\,\, g_2=26B_0+2B_1+26B_2.
      \end{equation*}
\end{ex}

Given two polynomials $g_1(x),g_2(x)\in \frac{R'[x]}{\langle f(x)\rangle}=\mathcal R'_f,$ %Let $k(x)=k_0+k_1x+\ldots+k_{n-1}x^{n-1} , h(x)=h_0+h_1x+\ldots+h_{n-1}x^{n-1}\in \mathcal R'_f$
we define the $\star$ inner product  as  
\begin{equation}\label{dual1}
    \langle g_1(x),g_2(x)\rangle_{\star}=(g_1)_0(g_2)_0+\ldots+ (g_1)_{n-1}(g_2)_{n-1}=(g_1\star g_2)(1).
\end{equation}
%Recall that an inner product over $\mathcal R'_f$ is called non degenerate if  $\langle k(x),  x^i \rangle _*=0,$ for all $0\leqslant i\leqslant n-1,$ leads to $k(x)=0.$ 
Since  $\langle g(x),  x^i \rangle_{\star}=g_i$ for $i=1,\ldots, n,$  the inner product is   non-degenerate.
Let $\mathcal C\subseteq \mathcal R_f$ be a polycyclic code.  The dual of $\mathcal C$ w.r.t. $\star$, denoted by  ${\mathcal C}^{\perp_{\star}},$ is define as
\begin{equation}
  {\mathcal C}^{\perp_{\star}}=\{ h(x)\in \mathcal R_f\mid  \langle MS(g),MS(h)\rangle_{\star}=0 \hbox{ for all } g(x)\in \mathcal C \}.  
\end{equation}
%Note that it is clear that for $h(x),k(x)\in \mathcal R_f$ we have that $\langle k(x),h(x)\rangle_*=0$ if and only if $\langle k(x),h(x)\rangle_0=0$ where this last inner product is the one defined in \cite{Fot20}. The following result is straight forward.
%\begin{cor}\label{dual}
%Let $\mathcal C\subseteq R'_f$ is a polycyclic codes over $R'$ generated by idempotent $e(x).$ Then
%\begin{enumerate}
%    \item $\mathcal C^c=\langle 1-e(x)\rangle=\mathcal C^{\perp_*}.$
%    \item $\mathcal C^{\perp_*}$ is a polycyclic codes.
%\end{enumerate}
%\end{cor}
%\begin{proof}
%\begin{enumerate}
%    \item By definition $\langle 1-e(x)\rangle= \mathcal C^c\subseteq \mathcal C^{\perp_*}.$ The other side???
 %   \item Since $\langle 1-e(x)\rangle$ is a principal ideal of $R'_f.$
%\end{enumerate}
%\end{proof}
%A linear code $\mathcal C$  is called Linear
%Complementary Dual (LCD) code if $\mathcal C\cap \mathcal C^{\perp_*}$ is trivial. Polycyclic codes over $\mathcal R'_f$ are LCD.
 %\begin{remark}
 %Note that a similar construction can be made in the case of multi-variable serial codes over chain rings in \cite{MR06}. In that work explicit polynomial idempotents where computed in terms of the roots and the $\perp_0$ dual can be derived as in the previous collorary. 
 %\end{remark}
\noindent We  also   define  an  inner-product on $\mathcal R_f$   by
\begin{equation}\label{inner}
    \langle g_1(x), g_2(x) \rangle_{\mathrm{tr}}=  \text{trace}\,(M(g_1g_2)),\quad  g_1(x), g_2(x) \in \mathcal R_f, 
\end{equation}
and denote $\mathcal C^{\perp_{tr}}=\{g\in \mathcal C\mid \langle g(x),h(x)\rangle_{tr}=0 \quad \text{for all}\quad  h\in \mathcal C \}$.  Let $k(x)\in \mathcal R_f$ and $\langle k(x),x^i\rangle_{tr}=0$ for all $i=0,1,\ldots,n-1.$
Then $\langle MS_f(k(x),MS_f(x^i))\rangle_{\star}=0.$ Thus $\alpha_1^ik(\alpha_1)+\ldots+\alpha_n^ik(\alpha_n)=0$ and hence $(k(\alpha_1),\ldots,k(\alpha_n))V=0.$ Since $V$ is non-singular, then the linear homogeneous system has the unique solution $k(\alpha_1)=\ldots=k(\alpha_n)=0.$ So $MS_f(k(x))=0,$ i.e. $k(x)=0.$ Henceforth the trace inner product is non-degenerate. 

If $\pi_i$ denotes the projection of $\mathcal R_f$ onto the coefficient of $x^i$ for $i=0,\ldots, n-1$, the trace map is defined in \cite{Local} as $\mathrm{tr}:\mathcal R_f\rightarrow R$  is given by
$
    \mathrm{tr}(g)=\sum \pi_i(x^ig).
$ It is clear that the trace inner product of $g_1,g_2$ is equal to  the  trace map  of $g_1g_2$.
\begin{prop}\label{dual preserving}  Let $g_1(x), g_2(x) \in \mathcal R_f$, then 
$$ \langle g_1(x), g_2(x) \rangle_{\mathrm{tr}}=0 \Longleftrightarrow \langle MS_f(g_1),\, MS_f(g_2)\rangle 
_{\star}=0 $$
\end{prop}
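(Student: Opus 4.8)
The plan is to prove the stronger fact that the two pairings are \emph{equal}, not merely simultaneously zero; the stated equivalence is then immediate. Concretely, I would show that for all $g_1(x), g_2(x) \in \mathcal{R}_f$,
\[
\langle g_1(x), g_2(x)\rangle_{\mathrm{tr}} \;=\; \sum_{i=1}^n g_1(\alpha_i)\,g_2(\alpha_i) \;=\; \langle MS_f(g_1),\, MS_f(g_2)\rangle_{\star}.
\]
The biconditional of the proposition is then the special case in which this common value is $0$.

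First I would establish the trace--eigenvalue identity $\mathrm{trace}(M(g)) = \sum_{i=1}^n g(\alpha_i)$ for every $g \in \mathcal{R}_f$. This rests on the diagonalization computed in Section~\ref{sec:DFT}: since $f \in \mathcal{J}$ the Vandermonde matrix $V$ is non-singular over $R'$, and $V^{-1} M(g) V = \mathrm{diag}[g(\alpha_1), \ldots, g(\alpha_n)]$. Because the trace is a conjugation invariant over the commutative ring $R'$ (as $\mathrm{trace}(AB) = \mathrm{trace}(BA)$), the trace of $M(g)$ equals the trace of this diagonal matrix, namely $\sum_{i=1}^n g(\alpha_i)$. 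Applying this with $g = g_1 g_2$ and using that $MS_f$ is a ring homomorphism into the Schur algebra (so that $(g_1 g_2)(\alpha_i) = g_1(\alpha_i) g_2(\alpha_i)$), I obtain $\langle g_1, g_2 \rangle_{\mathrm{tr}} = \mathrm{trace}(M(g_1 g_2)) = \sum_{i=1}^n g_1(\alpha_i)\,g_2(\alpha_i)$.

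On the transform side, the definition \eqref{eq:ms} of $MS_f$ says that the coefficient of $x^{i-1}$ in $MS_f(g)$ is exactly $g(\alpha_i)$, so the definition \eqref{dual1} of the $\star$-product gives directly $\langle MS_f(g_1), MS_f(g_2) \rangle_{\star} = \sum_{i=1}^n g_1(\alpha_i)\,g_2(\alpha_i)$. Comparing the two displayed sums then finishes the argument.

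The only point requiring care is the passage to the extension $R'$. The quantity $\mathrm{trace}(M(g))$ is a priori an element of $R$, whereas the eigenvalue sum lives in $R'$. I would handle this by observing that the conjugation $V^{-1} M(g) V$ takes place over $R'$, where $V$ is genuinely invertible precisely because $f \in \mathcal{J}$ forces each $\alpha_j - \alpha_i$ ($i \neq j$) to be a unit; hence the identity is an equality in $R'$, and the $R$-valued trace and the $R'$-valued eigenvalue sum agree as elements of $R' \supseteq R$. Beyond keeping this base-ring bookkeeping straight, I do not expect any serious obstacle, since the result is essentially just ``trace equals sum of eigenvalues'' transported through the Mattson--Solomon isomorphism.
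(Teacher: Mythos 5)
Your proposal is correct and follows essentially the same route as the paper: the paper likewise proves the stronger identity $\gamma(\mathrm{trace}(M(g_1g_2))) = \sum_i (g_1g_2)(\alpha_i) = \langle MS_f(g_1), MS_f(g_2)\rangle_{\star}$ by diagonalizing $M(g_1g_2)$ over $R'$ via the Vandermonde matrix and using the conjugation invariance of the trace. Your extra remark about tracking the base ring through $\gamma: R \to R'$ is exactly the bookkeeping the paper does by writing $\gamma(\mathrm{trace}(M(g_1g_2)))$.
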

\begin{proof}
We know $\gamma(M(g_1g_2))$ is similar to $\mathrm{diag}[(g_1g_2)(\alpha_1),\ldots,(g_1g_2)(\alpha_n)].$ So 
\begin{equation*}
\begin{split}
\gamma(\mathrm{trace}(M(g_1g_2))) & = \mathrm{trace}(\gamma(M(g_1g_2)))\\
 & = (g_1g_2)(\alpha_1)+\ldots +(g_1g_2)(\alpha_n)\\
 &=\langle MS_f(g_1),\, MS_f(g_2)\rangle_{\star}.
\end{split}
\end{equation*}
\end{proof}

%Note: 
%The second way of proof: Refere to \cite{Fot20} Lemma 3 and Proposition 3. %In  paper \cite{Fot20},  $f_0$ is a unit in order to left polycyclic code %be a right polycyclic. But I do not know whether or not this condition is %supposed to keep until the end of the paper. If not , in the next lemma  %there  is a proof for being non degenerate)
 Now consider the following inner product
on $\mathcal R_f,$
\begin{equation}
    \langle g_1(x), g_2(x) \rangle_{(0)}=  g_1g_2(0),\quad  g_1(x), g_2(x) \in \mathcal R_f. 
\end{equation} it is a non-degenerate symmetric bilinear form if $f_0\neq 0.$ The dual $\mathcal C^{\perp_0}$ of  code $\mathcal C$ is just its annihilator dual in  \cite{0dual,Fot20}. Since the matrix $V$ is non-singular the Mattson-Solomon transform is an  injective morphism and thus it is clear that for any ideal $\mathcal C$ we have $$\mathrm{Ann}(\mathcal C)= \mathcal C^{\perp_{MS}}=\{g\in \mathcal R_f\mid MS(g)\star MS(c)=0\,\hbox{ for all } c\in\mathcal C \}.$$
We have this result that identifies the dualities in the transform domain.
\begin{theo}\label{th:ann} Let $R$ be a finite local ring. If $f\in\mathcal J\subseteq R[x]$  and $\mathcal C$ is a code in $\mathcal R_f$ we have that
$$\mathcal C^{\perp_\mathrm{tr}}=\mathcal C^{\perp_\star}=C^{\perp_0}=\mathcal C^{\perp_{MS}}= \mathrm{Ann}(\mathcal C).$$
\end{theo}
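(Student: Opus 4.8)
The plan is to prove the stated chain of equalities by assembling it from four links, two of which are already available in the excerpt. From Proposition~\ref{dual preserving}, applied to each pair $(g,c)$ with $c\in\mathcal C$, I immediately get $\mathcal C^{\perp_{\mathrm{tr}}}=\mathcal C^{\perp_\star}$: indeed $g\in\mathcal C^{\perp_{\mathrm{tr}}}$ means $\langle g,c\rangle_{\mathrm{tr}}=0$ for every $c\in\mathcal C$, which by the proposition is equivalent to $\langle MS_f(g),MS_f(c)\rangle_\star=0$ for every such $c$, i.e. $g\in\mathcal C^{\perp_\star}$. Likewise, the remark preceding the theorem already records $\mathcal C^{\perp_{MS}}=\mathrm{Ann}(\mathcal C)$, which holds because $MS_f$ is an injective ring homomorphism carrying $\cdot$ to $\star$, so $gc=0$ in $\mathcal R_f$ iff $MS_f(g)\star MS_f(c)=0$. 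Thus it remains to splice in the two middle terms: I must show $\mathcal C^{\perp_\star}=\mathcal C^{\perp_{MS}}$ and $\mathcal C^{\perp_0}=\mathrm{Ann}(\mathcal C)$.

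For $\mathcal C^{\perp_\star}=\mathcal C^{\perp_{MS}}$ the containment $\mathcal C^{\perp_{MS}}\subseteq\mathcal C^{\perp_\star}$ is trivial, since $MS_f(g)\star MS_f(c)=0$ forces $\sum_{i=1}^{n}g(\alpha_i)c(\alpha_i)=0$. The substantive direction is the reverse, and here I will exploit that $\mathcal C$ is an ideal, not merely a submodule. Fix $g\in\mathcal C^{\perp_\star}$ and $c\in\mathcal C$. For every $j$ the element $x^{j}c$ again lies in $\mathcal C$, so
\[
0=\langle MS_f(g),MS_f(x^{j}c)\rangle_\star=\sum_{i=1}^{n}g(\alpha_i)\,\alpha_i^{\,j}\,c(\alpha_i),\qquad j=0,1,\dots,n-1.
\]
Setting $w_i=g(\alpha_i)c(\alpha_i)\in R'$, these $n$ equations read $V\,(w_1,\dots,w_n)^{T}=0$ with $V$ the Vandermonde matrix of the $\alpha_i$. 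Because $f\in\mathcal J$ the reductions $\bar\alpha_i$ are distinct, so $V$ is non-singular over the local ring $R'$ and the only solution is $w=0$ (Lemma~2.1 of \cite{NS00}). Hence $g(\alpha_i)c(\alpha_i)=0$ for each $i$, that is $MS_f(g)\star MS_f(c)=0$, so $g\in\mathcal C^{\perp_{MS}}$.

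For $\mathcal C^{\perp_0}=\mathrm{Ann}(\mathcal C)$ the inclusion $\mathrm{Ann}(\mathcal C)\subseteq\mathcal C^{\perp_0}$ is clear, as $gc=0$ gives $(gc)(0)=0$. For the converse I again use the ideal property together with the non-degeneracy of $\langle\,\cdot\,,\cdot\,\rangle_{(0)}$ noted in the excerpt (valid since $f_0\neq 0$). Let $g\in\mathcal C^{\perp_0}$ and $c\in\mathcal C$, and put $h=gc$. For each $j$ we have $x^{j}c\in\mathcal C$, whence $\langle h,x^{j}\rangle_{(0)}=(x^{j}gc)(0)=(g\,x^{j}c)(0)=0$ for $j=0,\dots,n-1$. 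Since $\{1,x,\dots,x^{n-1}\}$ is a basis of $\mathcal R_f$, this says $\langle h,p\rangle_{(0)}=0$ for all $p\in\mathcal R_f$, so non-degeneracy forces $h=gc=0$. As $c\in\mathcal C$ was arbitrary, $g\in\mathrm{Ann}(\mathcal C)$, and combining the four links closes the chain.

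I expect the only real obstacle to be the reverse inclusion $\mathcal C^{\perp_\star}\subseteq\mathcal C^{\perp_{MS}}$: the $\star$ pairing a priori only kills the single sum $\sum_i g(\alpha_i)c(\alpha_i)$, whereas the annihilator requires each product $g(\alpha_i)c(\alpha_i)$ to vanish separately. The device that resolves this is precisely that $\mathcal C$ is closed under multiplication by $x$, which produces the whole family of twisted sums indexed by $j$ and lets the invertibility of the Vandermonde matrix upgrade ``the sum vanishes'' to ``every coordinate vanishes.'' I would also double-check the precise hypothesis for non-degeneracy of $\langle\,\cdot\,,\cdot\,\rangle_{(0)}$, since over a local ring one really needs $f_0$ to be a unit (not merely nonzero) for the $\perp_0$ link to hold, the Gram matrix in the basis $\{x^{j}\}$ having determinant $\pm f_0^{\,n-1}$.
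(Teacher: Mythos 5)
Your proof is correct, but the bridge you build between the two halves of the chain is genuinely different from the paper's. The paper takes $\mathcal C^{\perp_0}=\mathcal C^{\perp_{MS}}=\mathrm{Ann}(\mathcal C)$ and $\mathcal C^{\perp_{\mathrm{tr}}}=\mathcal C^{\perp_\star}$ as already settled by the preceding discussion, and then closes the chain by proving $\mathcal C^{\perp_{\mathrm{tr}}}=\mathrm{Ann}(\mathcal C)$ directly through the idempotent structure of $\mathcal R_f$ (Lemma~\ref{lem:idem}): $\mathcal C$ is generated by a sum of primitive orthogonal idempotents, $\mathrm{Ann}(\mathcal C)$ by the complementary sum, and an element of $\mathcal C^{\perp_{\mathrm{tr}}}$ outside $\mathrm{Ann}(\mathcal C)$ would produce a nonzero $r$ with $\langle r,s\rangle_{\mathrm{tr}}=0$ for all $s$, contradicting non-degeneracy of the trace form. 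You instead close the chain at the link $\mathcal C^{\perp_\star}=\mathcal C^{\perp_{MS}}$, replacing the idempotent machinery by the shift-by-$x$ trick: since $\mathcal C$ is an ideal, the single scalar condition $\sum_i g(\alpha_i)c(\alpha_i)=0$ propagates to the $n$ conditions $\sum_i \alpha_i^{\,j}g(\alpha_i)c(\alpha_i)=0$, $j=0,\dots,n-1$, and non-singularity of the Vandermonde matrix forces each coordinate $g(\alpha_i)c(\alpha_i)$ to vanish; the same trick gives you a self-contained proof of $\mathcal C^{\perp_0}=\mathrm{Ann}(\mathcal C)$, which the paper only cites from the literature. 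Your route is more elementary (it uses only $f\in\mathcal J$ and invertibility of $V$, not the classification of ideals of $\mathcal R_f$) and it quietly repairs a loose step in the paper's argument, where ``$g\notin\mathrm{Ann}(\mathcal C)$'' is upgraded to ``$g\in\mathcal C\cap\mathcal C^{\perp_{\mathrm{tr}}}$'' without first projecting $g$ onto its idempotent components; what the paper's route buys is the explicit splitting $\mathcal R_f=\mathcal C\oplus\mathrm{Ann}(\mathcal C)$ that is reused in Theorem~\ref{th:companion}. Finally, your closing caveat is well taken and points at a gap in the statement rather than in your argument: over a local ring the Gram determinant of $\langle\cdot,\cdot\rangle_{(0)}$ in the basis $\{1,x,\dots,x^{n-1}\}$ is $\pm f_0^{\,n-1}$, so the $\perp_0$ link requires $f_0$ to be a unit, not merely nonzero --- for instance $f=x^2-x+p\in\mathbb Z_{p^2}[x]$ lies in $\mathcal J$, yet for $\mathcal C=\langle x-p\rangle$ one computes $|\mathcal C^{\perp_0}|=p^3$ while $|\mathrm{Ann}(\mathcal C)|=p^2$, so that equality genuinely fails there; the remaining four equalities in the chain survive without any hypothesis on $f_0$.
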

\begin{proof}
From the discussion above it is clear that $\mathcal C^{\perp_\mathrm{tr}}=\mathcal C^{\perp_\star}$ and $C^{\perp_0}=\mathcal C^{\perp_{MS}}= \mathrm{Ann}(\mathcal C)$. Moreover $\mathrm{Ann}(\mathcal C)\subseteq \mathcal C^{\perp_\mathrm{tr}}$ is straight forward. Let's prove the other direction.
Suppose there exist an element $g\in \mathcal C^{\perp_\mathrm{tr}}$ such that $g\notin \mathrm{Ann}(\mathcal C)=\mathcal C^c$, thus $g\in \mathcal C\cap \mathcal C^{\perp_\mathrm{tr}}$ and hence $g=r\cdot \sum_{j=1}^k e_{i_j}$ where  $\sum_{j=1}^k e_{i_j}$ is the idempotent generating $\mathcal C$. Consider now any other element $s\in \mathcal R_f$, then $s=\sum_{i=1}^t s_i e_{i}$ and $g\cdot s=\sum_{j=1}^k rs_{i_j} e_{i_j}$. Hence $0=\langle g,s\rangle_{\mathrm tr}= \langle r,s\rangle_{\mathrm tr}$ for all $s$ in $\mathcal R_f$, thus $r=0$ since the trace inner product is non-degenerate and therefore $g=0$.  
\end{proof}

 %********************************
\section{Polycyclic codes as invariant spaces}\label{S:inv}

%It is well known that the ring $\mathcal R_f$ has a set of primitive
%orthogonal idempotents ${e_i(x)}_{i=1}^s$ such that $ 1 =\sum e_i(c)$ and for each ideal $\mathcal C=\langle g(x)\rangle {\triangleleft} \mathcal R_f$ there is an $c(x)=\sum_{i=1}^{s^\prime} e_{i_s}(x)$ such that $\mathcal C=\langle e(x)\rangle$ and  $e(x)$ is the coset representative of $a(x)g(x)$ mod $f(x)$ where $a(x)g(x)+b(x)\frac{f(x)}{g(x)}=1$.  We summarize this results in the following proposition.

Let $f_1,f_2,\ldots f_r$ be pairwise coprime monic polynomials over $R,$ $f=f_1f_2\ldots f_r$ and $\hat{f_i}=\frac{f}{f_i}.$ There exists $a_i,b_i\in R$ such that $a_if_i+b_i\hat{f_i}=1.$ Let $e_i=b_i\hat{f_i}+\langle f(x)\rangle$ as in Lemma~\ref{lem:idem}.    Let us 
define the set $U_i\subseteq \mathcal M_{1,n}(R,f)$ as $U_i=ker f_i(E_f).$
\begin{prop}\label{fundamental}\
\begin{enumerate}
 \item $R_fe_i=Ann_{\mathcal R_f} (f_i).$
 \item $c\in\mathcal R_fe_i$ if and only if $e_ic=c.$
   
      \item The matrix $M(e_i)=e_i(E_f)$ is the generator matrix of  the polycyclic code $\mathcal R_f e_i.$
      \item $M(e_i)$ is invariant under multiplication by  the companion matrix $E_f$ for all $i=1,\ldots r$ and they are pairwise orthogonal  idempotent matrices.
      \item The image of a  polycyclic code under $M$ is an invariant ideal under multiplication by $E_f.$
    \item $U_i\cong \mathcal R_fe_i.$
\end{enumerate}
\end{prop}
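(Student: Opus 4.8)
The plan is to reduce every one of the six assertions to a statement about the ring isomorphism $M:\mathcal R_f\to\mathcal M_n(R,f)$ and its companion $\varphi:\mathcal M_n(R,f)\to\mathcal M_{1,n}(R,f)$. The single fact that makes the passage between polynomials and matrices routine is that $M$ is a unital ring homomorphism with $M(x)=E_f$, so that $M(g(x))=g(E_f)$ for every $g$; in particular $M(f_i)=f_i(E_f)$ and $M(e_i)=e_i(E_f)$. On the polynomial side I would record once and for all the idempotent relations from Lemma~\ref{lem:idem}, namely $e_ie_j=0$ for $i\ne j$, $e_i^2=e_i$ and $\sum_i e_i=1$, together with the coprimality identity of the hypothesis, which in $\mathcal R_f$ reads $a_if_i+e_i=1$ since $e_i=b_i\hat f_i$.

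For (1) I would first note $e_if_i=b_i\hat f_if_i=b_if\equiv 0$, so any $g=he_i$ satisfies $gf_i=0$, giving $\mathcal R_fe_i\subseteq\mathrm{Ann}_{\mathcal R_f}(f_i)$. Conversely, multiplying $a_if_i+e_i=1$ by any annihilator $g$ (so that $gf_i=0$) yields $g=ge_i\in\mathcal R_fe_i$, closing the reverse inclusion. Part (2) is then immediate: if $c=he_i$ then $e_ic=he_i^2=he_i=c$, and conversely $c=e_ic$ exhibits $c\in\mathcal R_fe_i$.

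For (3)--(5) I would transport these statements through $M$. The rows of $M(e_i)$ are $\rho_f(x^je_i)$ for $0\le j\le n-1$, whose $R$-span is exactly $\{\rho_f(he_i):h\in\mathcal R_f\}=\mathcal R_fe_i$, so $M(e_i)$ is a generator matrix, which is (3). Since $M$ is multiplicative, $M(e_i)^2=M(e_i^2)=M(e_i)$ and $M(e_i)M(e_j)=M(e_ie_j)=0$ for $i\ne j$, so the $M(e_i)$ are pairwise orthogonal idempotents; invariance under $E_f$ is the fact that $\mathcal R_fe_i$ is an ideal, hence closed under multiplication by $x$, which on coordinate vectors is right multiplication by $E_f=M(x)$. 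This gives (4), and (5) is the same observation with $e_i$ replaced by an arbitrary ideal $\mathcal C$: the set $M(\mathcal C)$ is an ideal of $\mathcal M_n(R,f)$ because $M$ is a ring isomorphism, and it is $E_f$-invariant because $xc\in\mathcal C$ whenever $c\in\mathcal C$.

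Finally, for (6) I would use the first-row isomorphism $\varphi\circ M:\mathcal R_f\to\mathcal M_{1,n}(R,f)$, $g\mapsto\rho_f(g)$. The key compatibility is that, under the product of \eqref{multiplication}, $\rho_f(g)\cdot f_i(E_f)=\rho_f(g)M(f_i)$ is the first row of $M(g)M(f_i)=M(gf_i)$, i.e.\ it equals $\rho_f(gf_i)$; hence $\rho_f(g)\in U_i=\ker f_i(E_f)$ precisely when $gf_i=0$ in $\mathcal R_f$, that is when $g\in\mathrm{Ann}_{\mathcal R_f}(f_i)=\mathcal R_fe_i$ by (1). Thus $\varphi\circ M$ restricts to an isomorphism $\mathcal R_fe_i\xrightarrow{\sim}U_i$. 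I expect the only genuinely delicate point to be this last identification: one must take care that $U_i$ lives in $\mathcal M_{1,n}(R,f)$ with its twisted product rather than in the naive column-kernel sense, and must verify that annihilation by $f_i(E_f)$ there really does correspond to the ideal-theoretic annihilator of $f_i$. Once the identity $\rho_f(g)M(f_i)=\rho_f(gf_i)$ is in place, everything else is bookkeeping.
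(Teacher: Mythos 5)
Your proposal is correct and follows essentially the same route as the paper's own proof: both inclusions in (1)--(2) via the identity $a_if_i+b_i\hat f_i=1$ and the idempotent relations, (3)--(5) by transporting through the ring isomorphism $M$, and (6) via the first-row map and the compatibility $\rho_f(g)M(f_i)=\rho_f(gf_i)$. The only difference is cosmetic reorganization (you spell out part (1) in full where the paper defers the coprimality identity to part (2)), so nothing further is needed.
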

\begin{proof}\
\begin{enumerate}
    
    \item It follows from the annihilator definition.
\item It follows from $a_if_i+b_i\hat{f}_i=1$ and Part $1.$
\item  The rows of $M(e_i)$ are the coefficients of $e_i(x),\,xe_i(x),\ldots,x^{n-1}e_i(x).$
\item $M(\mathcal R_fe_i)$ is an ideal in $\mathcal M_n(R,f)$ and we know that $\{\mathrm{Id}, E_f, E_f^2,\ldots, E_f^{n-1}\}$ is a basis for  $\mathcal{M}_n(R,f).$ Moreover, since $M$ is an isomorphism,  for all $i\ne j$ we have $M(e_i)\ne M(e_j),$  $M(e_i)M(e_i)=M(e_i^2)=M(e_i),$ and $M(e_i)M(e_j)=M(0)=0$.

\item It is clear.
\item 
%If we let $\Phi=M \circ \varphi,$ 
It is enough to prove that for a fixed $i,$ $\rho_f(\mathcal R_f e_i)=U_i.$ Let $k=re_i\in \mathcal R_fe_i $ for some $r\in \mathcal R_f.$ Since $e_if_i=0,$ we have $k(E_f)f_i(E_f)=0,$ and hence $\varphi(k(E_f))f_i(E_f)=0,$ i.e. $\rho_f(k)\in U_i.$ Conversely, let $[a_0 \ldots a_{n-1}]\in U_i.$ Denote $f_i(x)=b_0+b_1x+\ldots+b_{n-1}x^{n-1}.$ Applying definition \eqref{multiplication}, we have $[a_0 \ldots a_{n-1}].[b_0\ldots b_{n-1}]=0.$ If we denote $a(x)=a_0+a_1x+\ldots+a_{n-1}x^{n-1},$ 
$$a(x)f_i(x)=\rho_f^{-1}([a_0\ldots a_{n-1}].[b_0 \ldots b_{n-1}])=0.$$
Therefore $a(x)\in Ann (f_i),$ i.e. $\rho_f^{-1}[a_0\ldots a_{n-1}]\in \mathcal R_fe_i$ and the proof is complete.
\end{enumerate}
\end{proof}

%%\subsection{Multipliers and automorphism}
Taking into account the previous proposition and since the ring $\mathcal R_f$ admits a unique complete set of primitive pairwise orthogonal idempotents $\{e_1, e_2,...,e_r\},$ polycyclic codes over rings  decompose  into some minimal polycyclic codes corresponding to each one of them  and the following result follows.

\begin{theo}\label{th:companion}
Let $E_f$ be companion matrix with minimal polynomial $f$ and $f=f_1\ldots f_r$ decomposes in to pairwise coprime monic irreducible polynomials.   Then
$$ \mathcal R_f \cong U_{1}\oplus\ldots\oplus U_{r},$$ where $U_{i}=Ker f_i(E_f)\subseteq \mathcal M_{1,n}(R,f).$
%Note: according to the multiplication over \mathcal M_{1,n}(R,f), $Ker f(E_f)=\{[a_1,\ldots,a_n]f_i[E_f]\}$
\begin{enumerate}
   \item Each of the above summands is an  indecomposible polycyclic code with $E_f$-invariant image in $\mathcal M_n(R,f).$ 
   \item Each polycyclic code $\mathcal C\subseteq \mathcal R_f$ can be seen as  a direct sum of  $U_{i}$ for some indices $i.$
   \item If a polycyclic code $\mathcal C$ decomposes as  $\mathcal C\cong \bigoplus _{i\in I} U_i$ then $$\mathcal C^{\perp_{tr}} \cong  \left( \bigoplus_{i\in I} U_i\right)^c= \bigoplus_{i\notin I} U_i$$
\end{enumerate}
\end{theo}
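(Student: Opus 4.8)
The plan is to base everything on the idempotent decomposition of Lemma~\ref{lem:idem} together with the identifications in Proposition~\ref{fundamental}. For the ambient isomorphism, Lemma~\ref{lem:idem} gives $\mathcal R_f=\bigoplus_{i=1}^r e_iR[x]=\bigoplus_{i=1}^r\mathcal R_f e_i$ along the unique complete set of primitive orthogonal idempotents $\{e_1,\dots,e_r\}$, while part~(6) of Proposition~\ref{fundamental} identifies each summand $\mathcal R_f e_i$ with $U_i=\ker f_i(E_f)$. Composing the two yields $\mathcal R_f\cong U_1\oplus\cdots\oplus U_r$.

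For part~1, the $E_f$-invariance of $U_i$ is formal: powers of $E_f$ commute, so $v\,f_i(E_f)=0$ gives $(vE_f)f_i(E_f)=v\,f_i(E_f)\,E_f=0$, whence $U_iE_f\subseteq U_i$; equivalently this is part~(5) of Proposition~\ref{fundamental} carried across the isomorphism of part~(6). Indecomposability follows from primitivity of $e_i$: a nontrivial splitting of the ideal $\mathcal R_f e_i$ would produce orthogonal idempotents summing to $e_i$, contradicting primitivity. Concretely, $\mathcal R_f e_i\cong R[x]/\langle f_i\rangle$ is a separable local extension of $R$ because $f_i$ is irreducible in $\mathcal J$, and a local ring has no nontrivial idempotents.

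For part~3 I would first replace $\mathcal C^{\perp_{tr}}$ by $\mathrm{Ann}(\mathcal C)$ using Theorem~\ref{th:ann}, and then compute the annihilator in the idempotent basis. Writing $g=\sum_j g e_j$ via $\sum_j e_j=1$, orthogonality gives $g\cdot(\mathcal R_f e_i)=0$ iff $g e_i=0$ iff the $i$-th component of $g$ vanishes; hence $\mathrm{Ann}\bigl(\bigoplus_{i\in I}\mathcal R_f e_i\bigr)=\bigoplus_{i\notin I}\mathcal R_f e_i$, and transporting through part~(6) of Proposition~\ref{fundamental} gives $\bigl(\bigoplus_{i\in I}U_i\bigr)^c=\bigoplus_{i\notin I}U_i$.

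The delicate point, and the step I expect to be the real obstacle, is part~2. The natural start is the canonical splitting $\mathcal C=\bigoplus_{i=1}^r\mathcal C e_i$, in which each $\mathcal C e_i=\mathcal C\cap\mathcal R_f e_i$ is an ideal of $\mathcal R_f e_i\cong U_i$. To read $\mathcal C$ as a direct sum of \emph{entire} summands $U_i$ one needs each $\mathcal C e_i$ to be either $0$ or all of $U_i$, i.e. that $U_i$ is a minimal ideal. This is automatic over a field, where $R[x]/\langle f_i\rangle$ is itself a field, but over a finite local ring $R$ the summand $R[x]/\langle f_i\rangle$ is only local and carries the proper nonzero ideal $\mathfrak m\bigl(R[x]/\langle f_i\rangle\bigr)$; for instance, with $f=x-1$ one has $\mathcal R_f\cong R$ and $U_1=\mathcal R_f$, yet the maximal ideal $\mathfrak m$ is a polycyclic code that is neither $0$ nor $U_1$. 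Thus part~2 as stated forces the $U_i$ to be simple, and the argument must either add such a hypothesis or restrict to codes that are unions of the $U_i$; the other three assertions are then formal consequences of the cited results.
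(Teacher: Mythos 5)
Your proposal follows essentially the same route the paper does: the paper offers no written proof of this theorem at all, only the preceding sentence asserting that it ``follows'' from Proposition~\ref{fundamental} and the uniqueness of the complete set of primitive pairwise orthogonal idempotents. Your derivation of the ambient isomorphism from Lemma~\ref{lem:idem} together with part~(6) of Proposition~\ref{fundamental}, your argument for part~1 ($E_f$-invariance via part~(5), indecomposability from primitivity of $e_i$, equivalently from $\mathcal R_f e_i\cong R[x]/\langle f_i\rangle$ being local and hence without nontrivial idempotents), and your computation of the annihilator in the idempotent basis for part~3 after invoking Theorem~\ref{th:ann} are all correct and are precisely the steps the paper leaves implicit.

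The obstacle you flag in part~2 is genuine, and it is a defect of the theorem as stated rather than of your argument. When $R$ is a local ring with $\mathfrak m\neq 0$, each summand $\mathcal R_f e_i\cong R[x]/\langle f_i\rangle$ is local but not simple, so $\mathfrak m\,\mathcal R_f e_i$ is an ideal, hence a polycyclic code, that is neither $0$ nor $U_i$; your example $f=x-1$, $\mathcal C=\mathfrak m$ already refutes the claim that every ideal is a direct sum of entire summands $U_i$. What is true is only the canonical splitting $\mathcal C=\bigoplus_i\mathcal C e_i$ with each $\mathcal C e_i$ an ideal of $U_i$; part~2 holds verbatim only when $R$ is a field (the setting of \cite{polyinv} that the theorem is meant to generalize) or when one restricts to codes generated by idempotents. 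The same unstated assumption also appears in the paper's proof of Theorem~\ref{th:ann}, where an arbitrary code is taken to be generated by a sum of the $e_i$, so the hypothesis you identify as missing is one the paper should have imposed; it is not a step you failed to supply.
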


Note that this result generalizes the  decomposition in \cite{polyinv} based on 
the Primary Decomposition Theorem in linear algebra over finite fields.

\begin{remark}[BCH-like Bounds]
Mattson-Solomon transform can be  a great tool for understanding BCH-like bounds that have been established for different types of (chain) rings (see for example \cite{GS13,LXG16}) or based on invariant spaces for polycyclic codes over fields \cite{polyinv}. Note that the minimum distance of a linear general code over $R$ is the same as the one of its socle \cite[Proposition~5]{Nechaev}. Note that in general, we can not state that the minimum distance of a   code $\mathcal C$ is
equal to the minimum distance of the code $\bar{\mathcal C}$, since in general $d(\mathcal C)\leq d(\bar{\mathcal C})$. However, if they are Hensel's lifts of codes over $\mathbb F_q$ (see \cite{Serial} for a characterization)  we have the equality and therefore all classical bounds on distances for   codes over fields
(Bose–Ray-Chaudhuri–Hocquenghem, Hartmann–Tzeng, Roos, etc.) also apply to
their Hensel's lifts.
\end{remark}

\section{Isometric ambient spaces}
\label{S:isometric}
Assume that a finite ring $R$ is equipped with a weight $w.$ Linear codes $\mathcal C, \mathcal D\subseteq R^n$  are called isometric if there exists an $R$-linear isomorphism $\phi:C\to D$ which $w(\phi(c))=w(c)$ for all $c\in C.$ In the literature, codes $\mathcal C, \mathcal D$ are called isometrically equivalent. The
MacWilliams Extension Theorem, one of the most poweful theorems, states that the map $\phi: \mathcal C\rightarrow \mathcal D$ between  linear codes over $R$ is the Hamming-isometry if and only if it is  a monomial transformation, i.e for every $c\in \mathcal C$ there is a monomial matrix $M_c$ such that $\phi(c)=cM_c.$ Notice that every Hamming-isometry  is a homogeneous-isometry and vice versa, \cite{wood}. \\
%Let $\omega=[\omega_0\,\omega_1\ldots \omega_{n-1}] \in 
\begin{theo}\label{isovel}(\cite{Vel13} Theorem 3.1)
Let  $h(x)=x^n-h_{n-1}x^{n-1}-\cdots -h_1x-h_0$ be a polynomial in $R[x]$ of the same degree of $f$. If there exists a polynomial   $\omega\in \mathcal R_f$ such that  $h\left( \omega\right)=0 \in  \mathcal R_f$,  and  $\det (W)$ is a unit in $R,$ where  $W=
\begin{bmatrix}
  \rho_f(\omega^0) \\
   \rho_f(\omega^1) \\
 \rho_f(\omega^2) \\
  \vdots \\
   \rho_f(\omega^{n-1}) 
\end{bmatrix}$  then 
\begin{equation}\begin{array}{cccc}
   \theta:&  \mathcal M_{1, n}(R,h) &\longrightarrow &  \mathcal M_{1, n}(R,f) \\[0.5em]
     & \rho_h (x)& \mapsto & \rho_f(\omega),
\end{array}
\end{equation}
is an isomorphism which is the identity in  $R$ (where $R$ is identify with $\rho_h(r)$, $r\in \ \mathcal R_f$ a constant polynomial). %Conversely, suppose that there is an isomorphism $\theta : \mathcal M_{1, n}(R,h)\rightarrow \mathcal M_{1, n}(R,f)$ which $\theta(R)=R.$ Then there exists $\omega=[\omega_0\,\omega_1\ldots \omega_{n-1}]\in \mathcal M_{1, n}(R,f)$ such that $\theta(h(\sum_{i=0}^{n-1} \omega_i x^i))\equiv 0\mod f$ and  $\det W$ is a unit in $R.$
\end{theo}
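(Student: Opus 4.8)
The plan is to reduce the statement to the analogous claim about the ambient rings $\mathcal R_h=R[x]/\langle h\rangle$ and $\mathcal R_f$ themselves, and then build the desired isomorphism directly from the universal property of the polynomial ring. Recall that $\rho_f$ identifies $\mathcal R_f$ with $\mathcal M_{1,n}(R,f)$ as rings (composing the regular representation $M$ with the first-row isomorphism $\varphi$), and likewise $\rho_h$ identifies $\mathcal R_h$ with $\mathcal M_{1,n}(R,h)$. Under these identifications $\theta$ becomes the $R$-linear map $\psi:\mathcal R_h\to\mathcal R_f$ determined by $\psi(x)=\omega$ and $\psi(r)=r$ for $r\in R$; so it suffices to show that $\psi$ is a ring isomorphism.

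First I would construct $\psi$. Evaluation at $\omega$ gives an $R$-algebra homomorphism $\mathrm{ev}_\omega:R[x]\to\mathcal R_f$, $p(x)\mapsto p(\omega)$, which is a genuine ring homomorphism because $\mathcal R_f$ is commutative and $\mathrm{ev}_\omega$ fixes $R$. The hypothesis $h(\omega)=0$ in $\mathcal R_f$ says precisely that $h(x)$ lies in $\ker(\mathrm{ev}_\omega)$, so $\mathrm{ev}_\omega$ factors through the quotient $\mathcal R_h$ and yields a well-defined ring homomorphism $\psi:\mathcal R_h\to\mathcal R_f$ that is the identity on constants and sends $x\mapsto\omega$. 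This is the step that makes the single relation $h(\omega)=0$ do all the work of checking multiplicativity, sparing a direct computation with the twisted product on $\mathcal M_{1,n}$.

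Next I would compute $\psi$ on the monomial basis: $\psi(x^j)=\omega^j$, whose coefficient vector is $\rho_f(\omega^j)$, i.e.\ the $(j+1)$-st row of $W$. Hence, writing elements of $\mathcal R_h$ in coordinates, $\psi$ acts as right multiplication by $W$ on row vectors. Since $\det(W)$ is a unit in $R$, the matrix $W$ is invertible over $R$ (a matrix over a local ring is nonsingular exactly when its determinant is a unit), so $\psi$ is an $R$-module isomorphism. A bijective ring homomorphism is a ring isomorphism, which finishes the argument for $\psi$; transporting back through $\rho_h$ and $\rho_f$ then gives that $\theta$ is a ring isomorphism fixing $R$.

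The only genuinely delicate point I anticipate is bookkeeping rather than mathematics: one must confirm that the identifications $\rho_h$ and $\rho_f$ carry the products of $\mathcal M_{1,n}(R,h)$ and $\mathcal M_{1,n}(R,f)$ to ordinary multiplication in $\mathcal R_h$ and $\mathcal R_f$, so that ``$\theta$ is a ring homomorphism'' is literally the same statement as ``$\psi$ is a ring homomorphism'', and that $\theta$'s linear action coincides with right multiplication by $W$. Once this dictionary is in place, the universal-property construction of $\psi$ together with the unit-determinant hypothesis makes both the homomorphism property and the bijectivity essentially automatic.
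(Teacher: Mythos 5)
Your argument is correct: factoring the evaluation map $\mathrm{ev}_\omega:R[x]\to\mathcal R_f$ through $R[x]/\langle h\rangle$ using $h(\omega)=0$ gives the ring homomorphism for free, and the unit determinant of $W$ (which is exactly the matrix of this map in the monomial bases) gives bijectivity; the transport through the ring isomorphisms $\rho_h$ and $\rho_f$ is legitimate since these are established in the preliminaries as isomorphisms onto $\mathcal M_{1,n}(R,h)$ and $\mathcal M_{1,n}(R,f)$ with their twisted products. Note that the paper itself states this result as a citation of Theorem~3.1 of \cite{Vel13} and gives no proof, so there is nothing internal to compare against; your universal-property route is the standard and expected argument.
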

\begin{remark}\label{reiso}
To construct such a polynomial $h(x)$ as described  in the first statement of Theorem \ref{isovel}, choose $\rho_f(\omega)\in \mathcal M_{1, n}(R,f)$ such that $\det W$ is a unit element in $R.$ Now assume $[h_0\, h_1\ldots h_{n-1}]=\rho_f(\omega^n) W^{-1}.$
%\begin{enumerate}
  %     \item $W$ depends on both $\omega$ and $f.$
  %  \item To construct $h$ in the fist statement of Theorem, choose $\omega\in \mathcal M_{1\times n}(R,f)$ such that $\det W$ is a unit element in $R.$ Now suppose that $[h_0\, h_1\ldots h_{n-1}]:=\omega^n W^{-1}.$
 %    \item In the converse of Theorem, if $\theta$ is identity on $R,$ then $\theta(h(x))=h(x).$
%\end{enumerate}
\end{remark}

\begin{ex}\label{20}\
\begin{enumerate}
    \item Let $R=\mathbb Z_4,$ $f(x)=x^3-2x^2-x-1$ and $h(x)=x^3-x^2-1$ are polynomials in $R[x].$ If $\omega =1+x^2\in \frac{R[x]}{\langle f(x)\rangle}$ then $\theta: \frac{R[x]}{\langle h(x)\rangle}\rightarrow \frac{R[x]}{\langle f(x)\rangle}$ is an isomorphism since $h(\omega)=0$ and $\det{W}=1.$ Note that it is  not a Hamming isometry, because $\theta(x^2+x+1)=3x+1.$ 
    \item Let $R=\mathbb Z_4,$ $f(x)=x^4-3x-1$ and $h(x)=x^4-2x^2-x-3$ are polynomials in $R[x].$ If $\omega =3x+1\in \frac{R[x]}{\langle f(x)\rangle}$ then $\theta: \frac{R[x]}{\langle h(x)\rangle}\rightarrow \frac{R[x]}{\langle f(x)\rangle}$ is an isomorphism since $h(\omega)=0$ and $\det{W}=1.$ Note that this one is  not isometry, because $\theta(x^2)=x^2+2x+1.$ \end{enumerate}
 
\end{ex}

We know that isometrically equivalent  linear codes have both the same   algebraic structure and  distance properties thus it will be nice to know when two polycyclic ambient spaces are isometric or not. In \cite{Dinh15}, Dinh and Li classify all isometrically equivalent classes of constacyclic codes and only study representatives of equivalent classes.  The previous example shows that   $  \mathcal R_f$ and $ \mathcal R_h$ are not necessarily isometrically equivalent for different polynomials $f,h$ of the same degree even if they are isomorphic. In the rest of this section we will  describe when a polycyclic   ambient space   $\mathcal R_f$   is isometrically equivalent to another one. %Moreover, we describe that under what conditions  polycyclic codes $\mathcal C\subseteq \mathcal R_f$ and $\mathcal D\subseteq \mathcal R_h$ are isometrically equivalent.\\
Notice  that  the isomorphism $\theta$ is an  isometry if and only if  $W$ is a monomial matrix.
%recall that monomial matrices are invertible, so their determinas are unit in $R.$
  \begin{prop}\label{mono}
 With the notation above,
$W$ is a monomial matrix if and only if either $f(x)=x^n-f_0$ and $\omega_ix^i,$ where $f_0,\,\omega_i\in R^*$ and $(n,i)=1$ or $f(x)=x^n-f_1x$ and $\omega=\omega_jx^j,$ where $f_1,\,\omega_j\in R^*$ and $(n-1,j)=1$
   
\end{prop}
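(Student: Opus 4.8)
The plan is to read off what "$W$ monomial" says row by row. Writing $\omega^{k}\equiv c_k x^{p_k}\pmod{f}$ whenever $\omega^{k}$ reduces to a unit times a single monomial, the matrix $W$ (whose $k$-th row is $\rho_f(\omega^{k})$) is monomial exactly when every $\omega^{k}$, $k=0,\dots,n-1$, has this shape with $c_k\in R^{*}$ and the exponents $p_0,\dots,p_{n-1}$ form a permutation of $\{0,\dots,n-1\}$. The second row is $\rho_f(\omega)$ itself, so already the monomiality of $W$ forces $\omega=\omega_i x^{i}$ with $\omega_i\in R^{*}$; here $1\le i\le n-1$, since $i=0$ makes every power constant and fails for $n\ge 2$. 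Thus the real content of both directions is to relate the shape of $f$ to the exponent sequence $p_k$ obtained by iterating multiplication by $\omega$.

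For sufficiency I would just compute the powers under the two reduction rules. If $f=x^{n}-f_0$ with $f_0\in R^{*}$ then $x^{e}\equiv f_0^{\lfloor e/n\rfloor}x^{e\bmod n}$, so $\omega^{k}\equiv \omega_i^{k}f_0^{\lfloor ik/n\rfloor}x^{ik\bmod n}$ is always a unit-monomial, and the exponents $ik\bmod n$ run through a permutation of $\mathbb{Z}/n\mathbb{Z}$ precisely when $(n,i)=1$. If instead $f=x^{n}-f_1x$ with $f_1\in R^{*}$, then $x^{e}\equiv f_1 x^{e-(n-1)}$ for $e\ge n$, so every $x^{e}$ with $e\ge 1$ collapses to $f_1^{\,t}x^{e'}$ with $e'\in\{1,\dots,n-1\}$ and $e'\equiv e\pmod{n-1}$; for $\omega=\omega_j x^{j}$ the exponents of $\omega^{1},\dots,\omega^{n-1}$ are the residues $jk\bmod(n-1)$, which together with $p_0=0$ exhaust $\{0,\dots,n-1\}$ exactly when $(n-1,j)=1$. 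Both computations are routine.

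For necessity I would proceed in three steps. With $\omega=\omega_i x^{i}$ in hand, the key claim is that $x^{n}\bmod f$ is forced to be a unit times a single monomial, so that $f=x^{n}-f_m x^{m}$ with $f_m\in R^{*}$ and $0\le m\le n-1$. The cleanest route is to locate the index $k$ with $p_k=n-i$ (it exists because the $p_k$ permute $\{0,\dots,n-1\}$ and $n-i\in\{1,\dots,n-1\}$) and apply the recursion $x^{p_{k+1}}\sim x^{\,i+p_k}\bmod f$: there the bare exponent lands exactly on $n$, so $\omega^{k+1}\sim x^{n}\bmod f$ must be a unit-monomial. Once $f=x^{n}-f_m x^{m}$ is established, I invoke the standing hypothesis $f\in\mathcal J$: in $\mathbb F_q[x]$ one has $\bar f=x^{m}\bigl(x^{\,n-m}-\bar f_m\bigr)$, and distinctness of the roots of $\bar f$ forbids the repeated factor $x$, forcing $m\le 1$. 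This isolates exactly $f=x^{n}-f_0$ or $f=x^{n}-f_1x$, and re-running the exponent count of the sufficiency direction in each case pins down $(n,i)=1$ and $(n-1,j)=1$ respectively.

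The main obstacle is precisely the middle step of necessity, namely upgrading "every $\omega^{k}$ is a unit-monomial" to the rigid statement "$x^{n}\bmod f$ is a unit-monomial." The delicacy is twofold: the index $k$ with $p_k=n-i$ must be controlled so that $\omega^{k+1}$ is still one of the rows $\omega^{0},\dots,\omega^{n-1}$ (the borderline configuration $k=n-1$ needs separate handling), and a single reduction can in principle cascade when $i\nmid n$, so one must rule out $x^{n}\bmod f$ having two surviving terms. The most dangerous regime is $i=1$, i.e.\ $\omega$ a unit multiple of $x$, where none of $\omega^{0},\dots,\omega^{n-1}$ reaches degree $n$ and $W$ is already diagonal; here the argument must lean entirely on the hypothesis $f\in\mathcal J$ to exclude the spurious shapes of $f$, and this is the step I would expect to require the most care.
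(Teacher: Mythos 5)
Your overall strategy --- reduce monomiality of $W$ to the statement that the $\omega^k$, $0\le k\le n-1$, are unit monomials whose exponents permute $\{0,\dots,n-1\}$; force $\omega=\omega_i x^i$ from the second row; force $x^n\bmod f=\lambda(x)$ to be a unit monomial from the first power of $\omega$ that lands on degree $n$; invoke $f\in\mathcal J$ to exclude $\lambda=f_tx^t$ with $t\ge 2$; and finish with the gcd count --- is exactly the route the paper takes. Your sufficiency computation and your handling of the coprimality conditions are sound and in fact more explicit than the paper's.

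The problem is the step you yourself single out as delicate, and the escape you propose does not work. In the regime $i=1$, i.e. $\omega=\omega_1x$ with $\omega_1\in R^*$, none of $\omega^0,\dots,\omega^{n-1}$ ever reaches degree $n$, so $W=\mathrm{diag}(1,\omega_1,\dots,\omega_1^{n-1})$ is monomial for \emph{every} monic $f$ of degree $n$, and no information about $\lambda(x)$ can be extracted from the rows of $W$. You hope that $f\in\mathcal J$ rules out the ``spurious'' shapes of $f$, but membership in $\mathcal J$ only forbids $\bar f$ from having repeated roots; it does not force $f$ to be a binomial. The paper's own Example 1, $f(x)=x^3+5x+3\in\mathbb Z_9[x]$, lies in $\mathcal J$, is neither of the form $x^n-f_0$ nor of the form $x^n-f_1x$, and with $\omega=x$ gives $W=\mathrm{Id}$. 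So the ``only if'' direction cannot be proved as literally stated: one must either exclude the trivial choice $\omega\in R^*x$ (for which $\theta$ is essentially the identity) or reformulate the proposition. Be aware that the paper's proof has exactly the same defect, hidden in the unjustified assertion that ``if $\lambda(x)$ is the sum of two terms or more, then there is $k$ such that $\omega^k$ is the sum of two terms''; that assertion is false precisely when $i=1$. Your secondary worries --- the borderline index $k=n-1$ with $p_k=n-i$, and cascading reductions when the bare exponent overshoots $n$ --- are also genuine and are not addressed in the paper, but they can be repaired within your framework; the $i=1$ degeneracy cannot be, without changing the statement.
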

\begin{proof}
\
We know that $W$ is monomial if and only if 
\begin{equation}\label{Mon}
\{\omega^k\,:\, 1\leqslant k\leqslant n-1\}=\{a_ix^i\,:\, 1\leqslant i\leqslant n-1, a_i\in R^*\}.
\end{equation}
Suppose that $W$ is monomial. Let $w=w_0+w_1x+\ldots w_{n-1}x^{n-1}$ and $f(x)=x^n-\lambda(x).$ Since $W$ is monomial, $w$ can not be  the sum of two terms or more. So there is $i, 1\leqslant i\leqslant n-1,$ such that $\omega=\omega_ix^i$ and $\omega_i\in R^*.$ Moreover, if $\lambda(x)$ is the sum of two terms or more, then there is $k,1\leqslant k\leqslant n-1,$ such that $\omega^k$ is the sum of two terms, a contradiction. So $\lambda(x)=f_0,\, f_0\in R^*$ or $\lambda(x)=f_1x,\, f_1\in R^*.$ Notice  that if $\lambda(x)=f_tx^t, t\geqslant 2,$ then $f(x)$ is not in $\mathcal J$. In  the case $f(x)=x^n-f_1x,$ let $(n-1,j)=k>1.$ Then there is $t_1,\,t_2<n-2$ such that $kt_1=j$ and $kt_2=n-1.$ We obtain 
$\omega ^{t_2}=x^{jt_2}=x^{kt_1t_2}=x^{nt_1}x^{-t_1}=1,$ a contradiction with \eqref{Mon}. With a similar discussion  in the other case, we prove $(n,i)=1$.\\
Conversely, let $f(x)=x^n-f_1x$ and $\omega=\omega_jx^j.$ The left side of incusion in \eqref{Mon} is trivial. For the other direction,  by contradiction assume that there are $k_1,\,k_2<n$ such that $k_1\neq k_2$ and $\omega^{k_1}=\omega^{k_2}$. So
$x^{k_1j}=x^{(n-1)+k_2j}$ and hence $(k_1-k_2)j=n-1.$ A similar discussion occurs for the case $f(x)=x^n-f_0.$
\end{proof}
\begin{cor}\label{isometry}
\
\begin{enumerate}
    \item Let $f(x)=x^n-f_0$ and $\omega=\omega_ix^i,$ where $(n,i)=1$ and $f_0,\,\omega_i\in R^*.$ Then 
    $\mathcal M_{1, n}(R,f)$ and 
    $\mathcal M_{1, n}(R,x^n-\omega_i^nf_0)$ are isometric.
    \item Let $f(x)=x^n-f_1x$ and $\omega=\omega_jx^j,$ where $(n-1,j)=1$ and $f_1,\,\omega_j\in R^*.$ Then 
    $\mathcal M_{1, n}(R,f)$ and 
    $\mathcal M_{1, n}(R,x^n-\omega_j^{n-1}f_1^jx)$ are isometric.
   \end{enumerate}
\end{cor}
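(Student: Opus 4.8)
The plan is to read the corollary as the explicit output of Theorem~\ref{isovel} in the two monomial situations singled out by Proposition~\ref{mono}. Recall the observation made just before Proposition~\ref{mono}: the isomorphism $\theta$ of Theorem~\ref{isovel} is a Hamming-isometry precisely when $W$ is a monomial matrix. In each case the hypotheses---$(n,i)=1$ with $f_0,\omega_i\in R^*$ in the first, and $(n-1,j)=1$ with $f_1,\omega_j\in R^*$ in the second---are exactly what Proposition~\ref{mono} requires for $W$ to be monomial, so $\theta$ is an isometry with no further work. The only remaining task is to identify the source polynomial $h$ for which $\theta\colon\mathcal M_{1,n}(R,h)\to\mathcal M_{1,n}(R,f)$ realizes this isometry; by Remark~\ref{reiso} that $h=x^n-\sum_{k=0}^{n-1}h_k x^k$ is the unique polynomial with $\omega^n=\sum_{k=0}^{n-1}h_k\omega^k$ in $\mathcal R_f$, so everything comes down to reducing $\omega^n$ modulo $f$ and reading off its expansion in the basis $\{\omega^0,\dots,\omega^{n-1}\}$.

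For the first case I would use $x^n=f_0$ in $\mathcal R_f$ to compute
\[
\omega^n=(\omega_i x^i)^n=\omega_i^n (x^n)^i=\omega_i^n f_0^{\,i}\cdot\omega^0 .
\]
Since $W$ is invertible the powers $\omega^0,\dots,\omega^{n-1}$ form a basis, so this expansion is the unique one: $h_0=\omega_i^n f_0^{\,i}$ and $h_k=0$ for $k\geq1$. Hence $h(x)=x^n-\omega_i^n f_0^{\,i}$, again a polynomial of the form $x^n-(\text{unit})$, and $\theta$ gives the asserted isometry between $\mathcal M_{1,n}(R,f)$ and $\mathcal M_{1,n}(R,x^n-\omega_i^n f_0^{\,i})$.

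For the second case I would instead iterate $x^n=f_1x$, that is $x^m=f_1x^{\,m-(n-1)}$ for $m\geq n$, starting from $\omega^n=\omega_j^n x^{\,jn}$. Because $n\equiv1\pmod{n-1}$, each of $j$ successive reductions drops the exponent by $n-1$ and contributes one factor $f_1$, carrying $x^{jn}$ down to $f_1^{\,j}x^{\,j}$; thus
\[
\omega^n=\omega_j^n f_1^{\,j}x^{\,j}=\omega_j^{\,n-1}f_1^{\,j}\cdot\omega .
\]
Reading off the (unique) expansion gives $h_1=\omega_j^{\,n-1}f_1^{\,j}$ and $h_k=0$ otherwise, so $h(x)=x^n-\omega_j^{\,n-1}f_1^{\,j}x$ and $\theta$ yields the second claimed isometry.

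The monomiality of $W$ is already packaged in Proposition~\ref{mono}, so I expect no difficulty there. The single point that needs care is the exponent bookkeeping in the second case: I must confirm that the repeated reduction terminates after exactly $j$ steps at the exponent $j\in\{1,\dots,n-1\}$, so that $\omega^n$ lands on a scalar multiple of $\omega^1$ and $h$ carries only an $x$-term (and is genuinely monic of degree $n$); this is where $n\equiv1\pmod{n-1}$ and $1\le j\le n-1$ are used, while the coprimality $(j,n-1)=1$ is exactly the ingredient Proposition~\ref{mono} uses to make $W$ monomial. The first case is free of this subtlety, since $\omega^n$ collapses to a pure constant in a single step; note also that the exponent $i$ appearing on $f_0$ there mirrors the exponent $j$ on $f_1$ in the second case.
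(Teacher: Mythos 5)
Your argument is exactly the intended one: the paper gives no separate proof of this corollary, deriving it directly from Theorem~\ref{isovel}, Remark~\ref{reiso} and Proposition~\ref{mono}, which is precisely what you do, and your exponent bookkeeping in the reduction $x^{jn}=f_1^{\,j}x^{j}$ is correct. Note, however, that your computation in the first case yields $h(x)=x^n-\omega_i^n f_0^{\,i}$, whereas the corollary as printed asserts isometry with $\mathcal M_{1,n}(R,x^n-\omega_i^n f_0)$; your version is the one consistent with the second case (where $f_1^{\,j}$ appears) and with the paper's own worked example, so the printed statement appears to carry a typo ($f_0$ should be $f_0^{\,i}$), and the statement as printed is only recovered by instead taking $\omega=\omega_i x$, i.e.\ $i=1$. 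You should flag this explicitly rather than silently proving the corrected claim.
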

%We can conclude the following corollary by Theorem \ref{mono} and Corollary \ref{isometry}. 
%\begin{cor}\label{12}
%Let $\mathcal C\subseteq \mathcal R_f$ and $\mathcal D\subseteq \mathcal R_h$ are  polycyclic codes.  Then $\mathcal C$ and $\mathcal D$ are isometrically equivalent if and only if $\mathcal C$ and $\mathcal D$ are constacyclic codes.
%\end{cor}
\noindent As a corollary we can recover the result \cite[ Theorem 4.3]{BGG12} as follows.
\begin{cor}
Let $n$ be an integer and there is $\lambda \in R^*$ such that $n^{th}$ root of $\lambda$ is an element in $R^*.$ Then $\lambda-$constacylic code of lenght $n$ is isometrically  equivalent to the cyclic code of lenght $n.$
\end{cor}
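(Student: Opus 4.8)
The plan is to read this corollary as a specialization of the first part of Corollary~\ref{isometry}, so the entire argument reduces to choosing the data in that corollary correctly. A $\lambda$-constacyclic code of length $n$ is an ideal of the ambient space $\mathcal{R}_f$ with $f(x)=x^n-\lambda$, while a cyclic code of length $n$ is an ideal of $\mathcal{R}_g$ with $g(x)=x^n-1$. Hence it suffices to produce a Hamming-isometric ring isomorphism between these two ambient spaces: since such a map carries ideals to ideals, each $\lambda$-constacyclic code will then correspond to a cyclic code, and the restriction of the isometry to that code gives the asserted isometric equivalence.

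First I would invoke Corollary~\ref{isometry}(1) with $f_0=\lambda$ and $i=1$. The condition $(n,i)=1$ is automatic for $i=1$, so no divisibility hypothesis on $n$ is imposed. The corollary then yields, for any unit $\omega_1\in R^*$, an isometry between $\mathcal{M}_{1,n}(R,x^n-\lambda)$ and $\mathcal{M}_{1,n}(R,x^n-\omega_1^n\lambda)$, realized by the monomial isomorphism $\theta$ associated to $\omega=\omega_1 x$. The remaining task is purely to force the target polynomial to be $x^n-1$, that is, to choose $\omega_1$ with $\omega_1^n\lambda=1$. This is exactly where the hypothesis is used: by assumption $\lambda$ admits an $n$-th root $\mu\in R^*$, so $\mu^n=\lambda$, and taking $\omega_1=\mu^{-1}\in R^*$ gives $\omega_1^n\lambda=\mu^{-n}\mu^n=1$. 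Thus the second ambient space is precisely $\mathcal{M}_{1,n}(R,x^n-1)$, the cyclic ambient space.

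I do not anticipate a genuine obstacle, as the statement is a direct instance of Corollary~\ref{isometry}, and indeed recovers \cite[Theorem 4.3]{BGG12}. The only points needing care are bookkeeping: one must verify that the correct root to invert is $\mu$ (so that $\omega_1=\mu^{-1}$, not $\mu$, produces $x^n-1$), and one must note explicitly that $\theta$ is a ring isomorphism, so the ambient-space isometry descends to each individual code rather than merely matching the two ambient spaces.
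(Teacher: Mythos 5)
Your proposal is correct and follows essentially the same route as the paper: a direct application of Corollary~\ref{isometry}(1) to pass between the ambient spaces $\mathcal M_{1,n}(R,x^n-\lambda)$ and $\mathcal M_{1,n}(R,x^n-1)$. The only cosmetic difference is directional: the paper takes $\omega_i$ with $\omega_i^n=\lambda$ and maps the cyclic ambient space onto the constacyclic one, whereas you take $\omega_1=\mu^{-1}$ and go the other way; these are the same isometry.
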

\begin{proof}
choose integer $i<n$ such that $(n,i)=1.$ We know  there is an element $\omega_i\in R^*$ such that $\omega_i^n=\lambda.$ Then $\mathcal M_{1, n} (R,x^n-\lambda)$ and $\mathcal M_{1, n}(R,x^n-1)$ are isometric.
\end{proof}
\begin{ex}
 Let $f(x)=x^6-f_1x$ and $\omega=\omega_4x^4,$ where
    $f_1,\omega_4\in R^*.$ Then 
    $$
    W=
\begin{bmatrix}
1 & 0 & 0 & 0 & 0 & 0\\
0 & 0 & 0 & 0 & \omega_4 & 0\\
0 & 0 & 0 & \omega_4^2f_1 & 0 & 0\\
0 & 0 & \omega_4^3f_1^2 & 0 & 0 & 0\\
0 & \omega_4^4f_1^3 & 0 & 0 & 0 & 0\\
0 & 0 & 0 & 0 & 0 & \omega_4^5f_1^3
\end{bmatrix}.
$$
We have $\omega^6=\omega_4^6f_1^4x^4$ and hence by Remark \ref{reiso}
 $$   [h_0\,\,\,h_1\,\ldots\, h_6]=[0\,\,\,0\,\,\,0\,\,\,0\,\,\,\,\omega_4^6f_1^4\,\,\,\,0]W^{-1}=[0\,\,\,\,\omega_4^5f_1^4\,\,\,0\,\,\,0\,\,\,0\,\,\,0].$$ Then $h(x)=x^6-\omega_4^5f_1^4x.$
  
\end{ex}

%\begin{remark}
%Let polycyclic codes $\mathcal C\subseteq \mathcal R_f$ and $\mathcal D\subseteq \mathcal R_h$  are isomorphic. Then there is an isomorphism  $\theta:\mathcal M_{1, n}(R,h)\rightarrow \mathcal M_{1, n}(R,f).$ If $\theta(r)=r$ for all $r\in R,$ then by Theorem \ref{isovel} and Corollary \ref{12}, $\mathcal C$ and $\mathcal D$ are isometrically equivalent if and only if $\mathcal C$ and $\mathcal D$ are constacyclic codes.
%\end{remark}
%\begin{cor}
%Let $(n,i)=1$,  $\omega_i\in R^*$ and $x^n-1=\prod_{k=1}^tf_k(x), $ where $f_k(x)$ are monic basic  irreducible pairwise coprime. Then $x^n-\omega_i^n=\prod_{k=1}^tf_k(\omega_ix^i).$ (can we add: Specially, if $i=1,$ the factorization is  basic irreducible.)
%\end{cor}
%\begin{proof}
%$$x^n-\omega_i^n=\theta(x^n-1)=\theta(\prod_{k=1}^tf_k(x))=\prod_{k=1}^tf_k(\omega_ix^i).$$
%\end{proof}

\section{Multivariable serial codes and transform domain}\label{sec:multivariable}

From now on we will assume that $R$ is a chain ring. A multivariable serial code   over $R$ is an ideal of the ring ${R[x_1,\ldots,,x_r]}/{\langle f_1(x),\ldots,f_r(x_r)\rangle},$ where $ f_i(x)\in \mathcal J$  for all $i=1,\ldots,r$, for an account on serial codes see \cite{Serial}. In this section we will propose a transform approach to those codes defining it duality. For the sake of simplicity all results in this section will be proved for $r=2$ and can be straight forward  worked out for $r>2$. Let $f_1(x), f_2(x)$ be polynomials in $R[x]$ of degree $n_1,n_2,$ respectively, we will
denote  the  multivariable ring ${R[x_1,x_2]}/{\langle f_1(x_1),f_2(x_2)\rangle}$ by $\mathcal R_{f_1,f_2}.$  There is an extension $R'$ of $R$ such that $f_1$ and $f_2$  splits over $R'$. Let $\{\alpha_1,\ldots,\alpha_{n_1}\}$ be a fixed ordering of roots of $f_1$ in $R'$ and $\{\beta_1,\ldots,\beta_{n_2}\}$ be that of $f_2$ in $R'$.  

%For basic definitions of tensor products  of modules and matrices, the reader is referred to \cite{}.(Advance linear algebra by Nicholas Loehr.)  

The tensor product of two $R$-modules $A,B$ is an  $R$-module denoted by $A\otimes B$ with multiplication $(a\otimes b)(c\otimes d)=ac\otimes db.$ If $A,B$ are free $R$-modules with basis $X_1,X_2,$ respectively, then $\{x_1\otimes x_2: x_1\in X_1, x_2\in X_2\}$ is a basis of $A\otimes B.$
If $A,B$ are free $R$-modules and  $I$ be a submodule of free $R$-module $A\otimes B,$ then  there are submodules $I_1\in A$ and $I_2\in B$ such that $I=I_1\otimes I_2.$ 
% I did not fine this in mention refrenced by in the page"Tensor product of submodules is a submodule of the tensor product" of mathstacks, it is true for flat modules. and we know the free modules are flat.
If $f: A\to B$ and $g: A'\to B' $ be $R$-module isomorphisms, then $f\otimes g: A\otimes B\to A'\otimes B'$ defended as $(f\otimes g)(x\otimes y)=f(x)\otimes g(y)$ is an $R$-module isomorphism. Tensor product over direct sum of modules is distributive.\\ 
Recall that the tensor product of matrices  $A$ of size ${m\times n}$ and $B$ of size ${p \times q}$(denote by $\otimes$) is $mp\times n q$ matrix $A\otimes B=(a_{i,j}B).$ If $A$ and $B$ are square matrices, then  $\mathrm{det}(A\otimes B)=(\mathrm{det} A)^{m}(\mathrm{det} B)^p$ and $\mathrm{tr}(A\otimes B)=(\mathrm{tr}\,A)(\mathrm{tr}\,B).$  Also  for matrices $A,A',B,B'$ we have that   $(A\otimes  B)(A'\otimes B')=(AA')\otimes(BB')$ mixes the ordinary matrix product and tensor product(mixed-product property). For more information on the tensor product of modules and matrices the reader can refer to \cite{Jacobson}.

 \begin{ex} This example is  for clarifying the influence of the basis one can choose. 
  Let the polynomials $f(x)=f_1+f_2x$ in $R[x]$ and $g(y)=g_0+g_1y+g_2y^2$ in $R[y]$  and also consider the basis $\beta=\{1,y,y^2,x,xy,xy^2\}$ on  $\mathcal R_{f,g}.$ By computing the representation matrix of elements  of $\mathcal R_{f,g}$ we see that they are related to companion matrices $E_f$ and $E_f$ as follows:
  \begin{enumerate}
      \item the representation matrix $x$ is $E_f\otimes \mathrm{Id}_3$
       \item the representation matrix $y$ is $\mathrm{Id}_2\otimes E_g$
        \item the representation matrix $xy$ is $E_f\otimes E_g$
         \item the representation matrix $xy^2$ is $E_f\otimes E_g^2$
          \item the representation matrix $y^2$ is $\mathrm{Id}_2\otimes E_g^2$
  \end{enumerate}
   Thus $\beta'=\{\mathrm{Id}_2\otimes \mathrm{Id}_3,\mathrm{Id}_2\otimes E_g,\mathrm{Id}_2\otimes E_g^2, E_f\otimes \mathrm{Id}_3, E_f\otimes E_g, E_f\otimes E_g^2\}$ is its associated basis for the  representation matrices.
   Note that if we 
    choose now another basis $\beta=\{1,x,y,y^2,xy,xy^2\},$ the representation matrix of  elements is not equal to tensor product of $E_f,E_g,$ like above,  but after  permutation on rows of  the representation matrix we see that both  will be equal. 
   %I think it is better to choose a basis depend on the degree of $f,g$. If the degree of $g>$ the degree of $f,$ we choose the basis $\beta=\{1,y,y^2,x,xy,xy^2\}$ and if the degree of $f>$ the degree of $g,$ we choose the basis
   %$\beta=\{1,x,x^2,y,yx,yx^2\}$. I can not justify the way of choosing the basis according to Multivariable polynomials. You can definitely explain this choice and define an order on $N$ such that $\{x^iy^j:(i,j)\in N\}$ meet our needs.
  \end{ex}
Consider basis $\theta=\{\theta_1,\ldots,\theta_{n_1n_2}\}$  for the  multivariable ring $\mathcal R_{f_1,f_2}$ described in above example. A matrix representation of each element of $\mathcal R_{f_1,f_2}$ can be computed with respect to the basis $\theta$. Recall that  notations $\mathcal M_{n_1}(R,f_1), \mathcal M_{n_2}(R,f_2)$ are used for matrix representations of rings $R[x_1]/f_1(x_1), R[x_2]/f_2(x_2),$ respectively. By Theorem \ref{centralizer} and mixed-product property, it is obvious that $\mathcal M_{n_1}(R,f_1)\otimes \mathcal M_{n_1}(R,f_2)$ is commutative.  Let $\rho$ denote a map from $\mathcal R_{f_1f_2}$ onto coefficients $x_1^ix_2^j.$ Then the map $M: \mathcal R_{f_1f_2}\to \mathcal M_{n_1}(R,f_1)\otimes \mathcal M_{n_1}(R,f_2)$ is defined by 
$$M(k(x_1,x_2))=
\begin{bmatrix}
\rho(\theta_1k(x_1,x_2))\\
\rho(\theta_2k(x_1,x_2))\\
\vdots\\
\rho(\theta_{n_1n_2}k(x_1,x_2))
\end{bmatrix}$$
is the regular representation of $\mathcal R_{f_1,f_2}$ with respect to $\theta.$ In fact $M$ maps element  $k(x,y)=\sum_i^{n_1-1}\sum_j^{n_2-1}k_{ij}x_1^ix_2^j$ to $\sum_i^{n_1-1}\sum_j^{n_2-1}k_{ij}E_{f_1}^i\otimes E_{f_2}^j.$ Clearly $M$ is an isomorphism. Note also that this fact arises from the fact that 
$R[x_1,x_2]/\langle f_1(x_1),f_2(x_2)\rangle \cong R[x_1]/\langle f_1(x_1)\rangle\otimes R[x_2]/\langle f_2(x_2)\rangle$, see \cite{Cazaran}, moreover it is a principal ideal ring if both $\mathcal R_f$ and $\mathcal R_f$ are principal ideal rings, see \cite{Kelarev}.

Let $V_{f_1}=V(\alpha_1,\ldots,\alpha_{n_1})$  and $V_{f_2}=V(\beta_1,\ldots,\beta_{n_1})$ be the  Vandermonde matrices associated to $f$ and $g$ respectively. We know that the image of the companion matrices $E_{f_1},E_{f_2}$ by $\gamma$ is diagonalizable by $V_{f_1}, V_{f_2},$ respectively.  Denote  $V=V_{f_1}\otimes V_{f_2}$ and suppose that $k(x_1,x_2)\in \mathcal R_{f_1,f_2}$ is an arbitrary element.
For simplicity  we  take  $M(K(x_1,x_2))=\gamma(M(K(x_1,x_2)))$, then 

\begin{align*}
M(K(x_1,x_2))V&=\sum\limits_i^{n_1-1}\sum\limits_j^{n_2-1}k_{ij}(E_{f_1}^i\otimes E_{f_2}^j)(V_{f_1}\otimes V_{f_2})\\         
&= \sum\limits_i^{n_1-1}\sum\limits_j^{n_2-1}k_{ij}(E_{f_1}^iV_{f_1})\otimes (E_{f_2}^jV_{f_2}) \\   
&= \sum\limits_i^{n_1-1}\sum\limits_j^{n_2-1}k_{ij}(V_{f_1}\,\mathrm{diag}[\alpha_1,\ldots,\alpha_{n_1}])\otimes(V_{f_2}\mathrm{diag}[\beta_1,\ldots,\beta_{n_2}])\\
&=\sum\limits_i^{n_1-1}\sum\limits_j^{n_2-1}k_{ij}(V_f\otimes V_g)(\mathrm{diag}[\alpha_1,\ldots,\alpha_{n_1}]\otimes \mathrm{diag}[\beta_1,\ldots,\beta_{n_2}])\\
&=V\sum\limits_i^{n_1-1}\sum\limits_j^{n_2-1}k_{ij}\mathrm{diag}[\alpha_1,\ldots,\alpha_{n_1}]\otimes \mathrm{diag}[\beta_1,\ldots,\beta_{n_2}]\\
&=V(\mathrm{diag}[k(\alpha_1,\beta_1),\ldots,k(\alpha_1,\beta_{n_2}),\ldots,k(\alpha_{n_1},\beta_1),\ldots,k(\alpha_{n_1},\beta_{n_2})]).
\end{align*}
Recall  that  since $f_1, f_2\in \mathcal J$ then $V_{f_1}$ and $ V_{f_2}$ are non-singular. Therefore, since  $\det(V)=(\det V_{f_1})^{n_1}(\det V_{f_2})^{n_2}$ then $V$ is non-singular.
Hence  $M(K(x_1,x_2))$ is diagonalized by $V$ and its eigenvalues are related to the roots of $f$ and $g$ as above.

Now we are able to define multivariable Mattson-Solomon transform for serial codes  as follows. 
\begin{equation}\begin{array}{cccc}
   MS_{f_1,f_2}:&  (\mathcal{R}_{f_1,f_2}, \cdot ) &\longrightarrow & (R'[x_1,x_2]/\langle f_1(x_1),f_2(x_2)\rangle, \star) \\[0.5em]
     & k(x_1,k_2) & \mapsto & \displaystyle 
     \sum\limits_i^{n_1}\sum\limits_j^{n_2}k(\alpha_i,\beta_j)x_1^{i-1}x_2^{j-1},
     \end{array}
\end{equation}
where $.$ denotes ordinary polynomial multiplication modulo  $f_1(x_1),f_2(x_2)$ and   $\star$ denotes the component-wise multiplication. We
define the  inner product $\langle\,,\rangle_{\star}$ over $R'[x_1,x_2]/\langle f_1(x_1), f_2(x_2)\rangle$ as   in \eqref{dual1}. The dual of   the polycyclic code $\mathcal C$ with respect to this inner product is denoted by $\mathcal C^{\perp_{\star}}$,  and furthermore, we define trace inner product on $\mathcal R_{f_1,f_2}$ as 
\begin{equation}
      \langle k_1(x_1,x_2),k_2(x_1,x_2)\rangle_{tr}=\mathrm{trace}(M(k_1(x_1,x_2)k_2(x_1,x_2))).
\end{equation}
Since $V$ is non-singular, then trace inner product is non-degenerate. Denote the trace  dual of the  multivarible  code $\mathcal C$ by $\mathcal C^{\perp_{tr}}.$
 The following result is proven as  Proposition~\ref{dual preserving}.
  \begin{prop}  Let $k_1(x_1,x_2), k_2(x_1,x_2) \in \mathcal R_{f_1,f_2}$, then 
$$ \langle k_1(x_1,x_2), k_2(x_1,x_2) \rangle_{tr}=0 \Longleftrightarrow \langle MS_{f_1,f_2}( k_1(x_1,x_2),),\, MS_{f_1,f_2}( k_2(x_1,x_2),)\rangle 
_{\star}=0 $$
\end{prop}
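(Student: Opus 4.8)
The plan is to mirror the univariate argument of Proposition~\ref{dual preserving}, exploiting that the diagonalization of $\gamma(M(k))$ by $V=V_{f_1}\otimes V_{f_2}$ has already been carried out above for an arbitrary element $k\in\mathcal R_{f_1,f_2}$, with diagonal entries $k(\alpha_i,\beta_j)$.

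First I would invoke that $M$ is a ring isomorphism, so that $M(k_1k_2)=M(k_1)M(k_2)$ and, in particular, $\gamma(M(k_1k_2))$ is similar via $V$ to the diagonal matrix whose entries are $(k_1k_2)(\alpha_i,\beta_j)$ for $1\le i\le n_1$ and $1\le j\le n_2$. Since the trace is invariant under similarity and $\gamma$, being a ring morphism applied entrywise, commutes with the trace, I would then compute
\begin{equation*}
\gamma\bigl(\langle k_1,k_2\rangle_{tr}\bigr)=\gamma\bigl(\mathrm{trace}(M(k_1k_2))\bigr)=\mathrm{trace}\bigl(\gamma(M(k_1k_2))\bigr)=\sum_{i=1}^{n_1}\sum_{j=1}^{n_2}(k_1k_2)(\alpha_i,\beta_j).
\end{equation*}

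Next, because evaluation at a point $(\alpha_i,\beta_j)$ is a ring homomorphism, each summand factors as $(k_1k_2)(\alpha_i,\beta_j)=k_1(\alpha_i,\beta_j)\,k_2(\alpha_i,\beta_j)$, so the right-hand side is exactly the Schur inner product $\langle MS_{f_1,f_2}(k_1),MS_{f_1,f_2}(k_2)\rangle_{\star}$ read off from the definition of $MS_{f_1,f_2}$. Finally, since $\gamma:R\to R'$ is the inclusion of $R$ into an extension, it is injective, so an element of $R$ vanishes if and only if its image under $\gamma$ does; applying this to $\langle k_1,k_2\rangle_{tr}\in R$ yields the claimed equivalence.

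The argument is essentially routine once the diagonalization is in hand. The only points requiring care are that the diagonalization established earlier for a single element transfers multiplicatively to the product $k_1k_2$, so that the diagonal entries genuinely factor, and that the injectivity of $\gamma$ legitimately lets one pass back and forth between ``$=0$ in $R$'' and ``$=0$ in $R'$''. I do not expect any genuine obstacle beyond this bookkeeping, which is why the statement can be asserted as proven in the same manner as Proposition~\ref{dual preserving}.
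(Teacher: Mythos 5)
Your proposal is correct and follows essentially the same route as the paper, which simply asserts that the result ``is proven as Proposition~\ref{dual preserving}'': diagonalize $\gamma(M(k_1k_2))$ by $V=V_{f_1}\otimes V_{f_2}$, use invariance of the trace under similarity to identify $\gamma(\langle k_1,k_2\rangle_{tr})$ with $\sum_{i,j}(k_1k_2)(\alpha_i,\beta_j)=\langle MS_{f_1,f_2}(k_1),MS_{f_1,f_2}(k_2)\rangle_{\star}$, and conclude. Your explicit remark that injectivity of $\gamma$ is what converts the equality of images into the stated equivalence is a small but welcome piece of bookkeeping that the paper leaves implicit.
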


\begin{lem}[\cite{Serial}]\label{Multi-idem}
There is a complete set  
of central orthogonal idempotents in $\mathcal R_{f_1,f_2}.$
\end{lem}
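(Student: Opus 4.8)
The plan is to transport the idempotent structure of each univariate factor through the tensor decomposition $\mathcal R_{f_1,f_2}\cong \mathcal R_{f_1}\otimes \mathcal R_{f_2}$ recorded above (following \cite{Cazaran}), where I abbreviate $\mathcal R_{f_i}=R[x_i]/\langle f_i(x_i)\rangle$. Since each $f_i$ lies in $\mathcal J$, it factors into pairwise coprime monic irreducible (hence primary) polynomials, so Lemma~\ref{lem:idem} furnishes a complete set of primitive pairwise orthogonal idempotents $\{e^{(1)}_1,\ldots,e^{(1)}_{r_1}\}\subseteq \mathcal R_{f_1}$ and $\{e^{(2)}_1,\ldots,e^{(2)}_{r_2}\}\subseteq \mathcal R_{f_2}$, with $\sum_i e^{(1)}_i=1$ and $\sum_j e^{(2)}_j=1$.

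First I would form the $r_1r_2$ elements $E_{ij}=e^{(1)}_i\otimes e^{(2)}_j$ and verify the defining properties using the mixed-product property $(a\otimes b)(c\otimes d)=ac\otimes db$ together with the commutativity of the factors. Idempotency is immediate, $E_{ij}^2=e^{(1)}_ie^{(1)}_i\otimes e^{(2)}_je^{(2)}_j=e^{(1)}_i\otimes e^{(2)}_j=E_{ij}$. Orthogonality holds because for $(i,j)\neq(k,l)$ at least one index differs, so $E_{ij}E_{kl}=e^{(1)}_ie^{(1)}_k\otimes e^{(2)}_le^{(2)}_j=0$ by orthogonality in the respective factor. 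Completeness is the distributivity of $\otimes$ over finite sums, $\sum_{i,j}E_{ij}=\left(\sum_i e^{(1)}_i\right)\otimes\left(\sum_j e^{(2)}_j\right)=1\otimes 1=1$.

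Centrality is then automatic: $\mathcal R_{f_1,f_2}$ is a quotient of the commutative ring $R[x_1,x_2]$, so every element, in particular every $E_{ij}$, is central. This already exhibits $\{E_{ij}\}$ as a complete set of central orthogonal idempotents, which is exactly what the statement demands.

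The step that I expect to require the most care is not any single computation but the bookkeeping around what may legitimately be claimed. While each $E_{ij}$ is orthogonal, idempotent, and central, one should \emph{not} assert primitivity: indeed $E_{ij}\mathcal R_{f_1,f_2}\cong \bigl(R[x_1]/\langle f^{(1)}_i\rangle\bigr)\otimes\bigl(R[x_2]/\langle f^{(2)}_j\rangle\bigr)$ is a tensor product of two separable local extensions of $R$, which may itself split further. Extracting a complete set of \emph{primitive} idempotents would therefore need an additional refinement of each $E_{ij}$, via Azumaya's Lemma applied to the corresponding extension; but since the statement asks only for a complete orthogonal central set, the family $\{E_{ij}\}$ suffices and no such refinement is required.
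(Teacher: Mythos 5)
Your construction is correct: the tensor decomposition $\mathcal R_{f_1,f_2}\cong \mathcal R_{f_1}\otimes \mathcal R_{f_2}$ is available (it is recorded in the paper with a citation to Cazaran--Kelarev), Lemma~\ref{lem:idem} does apply to each univariate factor because a polynomial in $\mathcal J$ factors into pairwise coprime monic irreducibles, and your verifications of idempotency, orthogonality and completeness for $E_{ij}=e^{(1)}_i\otimes e^{(2)}_j$ via the mixed-product rule are all sound; centrality is indeed vacuous in a commutative ring. The paper, by contrast, does not prove the lemma at all --- it simply points to Proposition~3.7 and Remark~5 of the cited reference on serial codes, where an explicit construction of the idempotents is carried out in the chain-ring setting. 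Your route is therefore genuinely different in that it is self-contained and rests only on Lemma~\ref{lem:idem} plus the tensor decomposition; notably, it nowhere uses that $R$ is a chain ring rather than merely local, which is exactly the weakening the authors themselves flag in the closing remark of that section as the only obstruction to generality. It also dovetails with how the paper actually uses the lemma immediately afterwards, where the decomposition $\bigoplus_{i,j}\mathcal R_{f_1}e_i\otimes \mathcal R_{f_2}e_j$ is invoked. Your caveat about primitivity is well taken and, if anything, more careful than the paper: the subsequent claim there that a primitive central idempotent of $A\otimes B$ is a tensor product of primitive ones can fail for commutative rings (already $\mathbb F_4\otimes_{\mathbb F_2}\mathbb F_4\cong\mathbb F_4\times\mathbb F_4$ shows $E_{ij}$ may split further), so restricting your claim to a complete orthogonal central set --- which is all the statement asks for --- is the right call.
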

The proof result follows from Proposition~3.7 and Remark~5 in \cite{Serial} where an explicit construction of such idempotents is made.

Taking into account the previous result it is easy to proof the following theorem  following the proof of Theorem~\ref{th:ann}.

\begin{theo}Let $R$ be a finite chain ring. If $\mathcal C$ is a  multivariable code in $\mathcal R_{f_1f_2}$ we have 
$$\mathcal C^{\perp_\mathrm{tr}}=\mathcal C^{\perp_{\star}}= \mathrm{Ann}(\mathcal C).$$
\end{theo}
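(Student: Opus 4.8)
The plan is to mirror the proof of Theorem~\ref{th:ann}, replacing $\mathcal R_f$ by $\mathcal R_{f_1,f_2}$ and the companion-matrix machinery by its tensor-product analogue developed just above. The statement splits into two independent equalities: $\mathcal C^{\perp_\mathrm{tr}}=\mathcal C^{\perp_\star}$ and $\mathcal C^{\perp_\mathrm{tr}}=\mathrm{Ann}(\mathcal C)$. The first is immediate from the multivariable analogue of Proposition~\ref{dual preserving} proved just above: since $\langle k_1,k_2\rangle_{tr}=0 \Longleftrightarrow \langle MS_{f_1,f_2}(k_1),MS_{f_1,f_2}(k_2)\rangle_\star=0$ holds for every pair, quantifying the left-hand relation over all $k_1\in\mathcal C$ reproduces exactly the membership condition defining $\mathcal C^{\perp_\star}$. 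So the whole content lies in the second equality.

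For the inclusion $\mathrm{Ann}(\mathcal C)\subseteq\mathcal C^{\perp_\mathrm{tr}}$ I would argue directly: if $g\in\mathrm{Ann}(\mathcal C)$ then $gc=0$ for every $c\in\mathcal C$, whence $M(gc)=0$ and $\langle g,c\rangle_\mathrm{tr}=\mathrm{trace}(M(gc))=0$. For the reverse inclusion the cleanest route exploits that the form has the shape $\langle k_1,k_2\rangle_\mathrm{tr}=L(k_1k_2)$ for the single linear functional $L(k)=\mathrm{trace}(M(k))$, together with the fact that $\mathcal C$ is an ideal. Indeed, given $g\in\mathcal C^{\perp_\mathrm{tr}}$ and any $c\in\mathcal C$, for every $z\in\mathcal R_{f_1,f_2}$ one has $\langle gc,z\rangle_\mathrm{tr}=L\big((gc)z\big)=L\big(g(cz)\big)=\langle g,cz\rangle_\mathrm{tr}=0$, because $cz\in\mathcal C$; the non-degeneracy of $\langle\,,\rangle_\mathrm{tr}$, already recorded above, then forces $gc=0$, so $g\in\mathrm{Ann}(\mathcal C)$. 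This uses only two ingredients in hand: that the trace form factors through multiplication, and its non-degeneracy (itself a consequence of $V=V_{f_1}\otimes V_{f_2}$ being non-singular, which holds since $f_1,f_2\in\mathcal J$ make $V_{f_1},V_{f_2}$ non-singular and $\det V=(\det V_{f_1})^{n_1}(\det V_{f_2})^{n_2}$ a unit).

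If one instead insists on literally imitating the idempotent argument of Theorem~\ref{th:ann}, one would invoke Lemma~\ref{Multi-idem} to obtain a complete set of central orthogonal idempotents, write $g$ in terms of its components, and use centrality and orthogonality to reduce $\langle g,s\rangle_\mathrm{tr}=0$ to a single non-degeneracy conclusion. Here I expect the main obstacle: over a chain ring (as opposed to a field) the summands $e_i\mathcal R_{f_1,f_2}$ are themselves chain rings with a nontrivial lattice of ideals, so a general code is \emph{not} generated by a partial sum of idempotents, and the step ``$g=r\cdot\sum_j e_{i_j}$ with $\sum_j e_{i_j}$ generating $\mathcal C$'' is not available verbatim. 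This is precisely why I would favour the factored-form argument above, which bypasses any decomposition of $\mathcal C$ and rests only on $\mathcal C$ being an ideal and on the non-degeneracy of the trace form; the idempotent set of Lemma~\ref{Multi-idem} is then needed only insofar as it underwrites the serial-code structure, not the duality computation itself.
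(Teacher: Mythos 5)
Your proof is correct, but it takes a genuinely different route from the paper's. The paper disposes of this theorem by instructing the reader to repeat the proof of Theorem~\ref{th:ann}, using the complete set of central orthogonal idempotents supplied by Lemma~\ref{Multi-idem}: there one takes $g\in\mathcal C\cap\mathcal C^{\perp_{\mathrm{tr}}}$, writes $g=r\cdot\sum_{j}e_{i_j}$ with $\sum_{j}e_{i_j}$ the idempotent generating $\mathcal C$, and concludes $g=0$ from non-degeneracy. You replace that decomposition argument with the observation that the form factors through multiplication, $\langle k_1,k_2\rangle_{\mathrm{tr}}=L(k_1k_2)$ with $L=\mathrm{trace}\circ M$, so that for $g\in\mathcal C^{\perp_{\mathrm{tr}}}$, $c\in\mathcal C$ and arbitrary $z$ one has $\langle gc,z\rangle_{\mathrm{tr}}=\langle g,cz\rangle_{\mathrm{tr}}=0$ (since $cz\in\mathcal C$), whence $gc=0$ by non-degeneracy and $g\in\mathrm{Ann}(\mathcal C)$. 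What your route buys is twofold: it is more elementary, resting only on $\mathcal C$ being an ideal and on the non-degeneracy of the trace form (both already established from the non-singularity of $V=V_{f_1}\otimes V_{f_2}$); and it is strictly more general, since --- as you rightly flag --- over a chain ring the blocks $e_k\mathcal R_{f_1,f_2}$ have nontrivial ideal lattices, so a code need not be generated by a partial sum of idempotents and the template step ``$g=r\cdot\sum_j e_{i_j}$'' is not available verbatim. Your factored-form argument therefore not only proves the statement but closes a gap in the proof the paper points to. The first equality $\mathcal C^{\perp_{\mathrm{tr}}}=\mathcal C^{\perp_\star}$ and the easy inclusion $\mathrm{Ann}(\mathcal C)\subseteq\mathcal C^{\perp_{\mathrm{tr}}}$ are handled the same way in both arguments, via the multivariable analogue of Proposition~\ref{dual preserving}.
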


Assume that  $\{e_k\}_{k\in K}$ is the complete set  
of centraly orthogonal idempotents in $\mathcal R_{f_1,f_2}.$ Also assume that  $f_1=\prod_{i\in I}p_i$ and $f_2=\prod _{j\in J}q_j$ are pairwise coprime decompositions of  $f_1,f_2$ and  $\{e_i\}_{i\in I}$ and $\{e_j\}_{j\in J}$ are the complete set  of centraly orthogonal idempotents in $\mathcal R_{f_1}$ and $\mathcal R_{f_2},$ respectively.  
Let us set  $U_{i}\subseteq \mathcal M_{1,n_1}(R,f_1)$ as $U_{i}=\ker  p_i(E_{f_1})$ and $U_{j}\subseteq \mathcal M_{1,n_2}(R,f_2)$ as $U_{j}=\ker q_j(E_{f_2}).$ We know that that $\mathcal R_{f_1f_2}\cong \mathcal R_{f_1}\otimes \mathcal R_{f_2},$ thus
by CRT theorem, Propositon \ref{fundamental} and the distributivity of tensor product over direct sum we have 
\begin{align*}
  \bigoplus\limits_{k\in K}  \mathcal R_{f_1,f_2}e_k & \cong \mathcal R_{f_1,f_2}\\
 &  \cong  (\bigoplus\limits_{i\in I} \mathcal R_{f_1}e_i)\otimes  (\bigoplus\limits_{j\in J} \mathcal R_{f_2}e_j)\\
 &  =\bigoplus\limits_{i\in I}\bigoplus\limits_{j\in J}(R_{f_1}e_i\otimes R_{f_2}e_j). 
\end{align*}
Note that 
a primitive central idempotent in  $A\otimes B$ is the tensor product  of  primitive  central idempotents of $A$ and $B$, and therefore 
  with above notations, we have following results  similar to Proposition~\ref{fundamental} and Theorem~\ref{th:companion}.

\begin{prop}\

\begin{enumerate}
 
 \item  For $e_k$ there is $p_i$ and $q_j$ such that $\mathcal R_{f_1,f_2}e_k \cong Ann (p_i)\otimes Ann (q_j).$
 \item $c\in\mathcal R_{f_1,f_2}e_k$ if and only if $e_kc=c.$
         \item  $M(e_k)$ is the generator matrix of the multivariable serial code $\mathcal R_{f_1,f_2} e_k.$
      \item $M(e_k)$ is invariant under multiplication by  matrices  $E_{f_1}^n\otimes E_{f_2}^m$, where $0\leqslant n\leqslant n_1-1$ and $0\leqslant m\leqslant n_2-1.$ Moreover $\{M(e_k)\}_{k\in K}$ are idempotent matrices and pairwise orthogonal.
      \item The image of  a multivariable serial code under $M$ is an invariant ideal under multiplication by all $E_{f_1}^n\otimes E_{f_2}^m$, where $0\leqslant n\leqslant n_1-1$ and $0\leqslant m\leqslant n_2-1.$
    \item $U_{i}\otimes U_{j}\cong \mathcal R_{f_1,f_2}e_k$ for some $i,j.$
\end{enumerate}
\end{prop}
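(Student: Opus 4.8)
The plan is to reduce every clause to its univariate counterpart in Proposition~\ref{fundamental} by transporting along the ring isomorphism $\mathcal R_{f_1,f_2}\cong\mathcal R_{f_1}\otimes\mathcal R_{f_2}$ and exploiting the fact, just recorded before the statement, that the primitive central idempotents of a tensor product are precisely the tensor products of primitive central idempotents of the two factors. Concretely, for the given $e_k$ I would first fix $i\in I$ and $j\in J$ with $e_k=e_i\otimes e_j$, where $e_i\in\mathcal R_{f_1}$ and $e_j\in\mathcal R_{f_2}$ are the idempotents attached to $p_i$ and $q_j$ as in Lemma~\ref{lem:idem}. This identification is the backbone of the whole argument, and once it is in place each part is a short transport step.

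For Part~1 I would compute $\mathcal R_{f_1,f_2}e_k\cong(\mathcal R_{f_1}\otimes\mathcal R_{f_2})(e_i\otimes e_j)=\mathcal R_{f_1}e_i\otimes\mathcal R_{f_2}e_j$ using the mixed-product property, and then invoke Proposition~\ref{fundamental}(1) to rewrite each factor as $\mathrm{Ann}(p_i)$ and $\mathrm{Ann}(q_j)$. Part~2 is the routine idempotent characterization: if $c=e_k c'$ then $e_k c=e_k^2 c'=e_k c'=c$, while conversely $e_k c=c$ forces $c\in\mathcal R_{f_1,f_2}e_k$, using only $e_k^2=e_k$. Part~6 follows the same template: Proposition~\ref{fundamental}(6) gives $U_i\cong\mathcal R_{f_1}e_i$ and $U_j\cong\mathcal R_{f_2}e_j$, so tensoring these isomorphisms (the tensor product of module isomorphisms is again an isomorphism) yields $U_i\otimes U_j\cong\mathcal R_{f_1}e_i\otimes\mathcal R_{f_2}e_j\cong\mathcal R_{f_1,f_2}e_k$ by Part~1.

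The matrix statements, Parts~3--5, are handled through the regular representation $M$, which the preceding discussion identifies via $M(x_1^a x_2^b)=E_{f_1}^a\otimes E_{f_2}^b$. For Part~3, the rows of $M(e_k)$ are the vectors $\rho(\theta_\ell e_k)$, which by definition span the ideal $\mathcal R_{f_1,f_2}e_k$, so $M(e_k)$ is a generator matrix. For Part~4, since $M$ is a ring isomorphism I would read off $M(e_k)^2=M(e_k^2)=M(e_k)$ and $M(e_k)M(e_{k'})=M(e_k e_{k'})=M(0)=0$ for $k\neq k'$ directly from orthogonality of the $e_k$; the invariance under right multiplication by $E_{f_1}^n\otimes E_{f_2}^m=M(x_1^n x_2^m)$ simply restates that the row space of $M(e_k)$ is an ideal, hence closed under multiplication by the monomials $x_1^n x_2^m$. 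Part~5 is the same invariance observation applied to an arbitrary ideal $\mathcal C$ in place of $\mathcal R_{f_1,f_2}e_k$.

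The main obstacle is essentially bookkeeping rather than a genuine difficulty: making the identification $e_k=e_i\otimes e_j$ precise and verifying that the abstract isomorphism $\mathcal R_{f_1,f_2}\cong\mathcal R_{f_1}\otimes\mathcal R_{f_2}$ is compatible with the concrete representation $M(x_1^a x_2^b)=E_{f_1}^a\otimes E_{f_2}^b$, so that the univariate results can be tensored without loss. Once that compatibility is secured, every clause becomes a one-line consequence of Proposition~\ref{fundamental} or a standard idempotent/ideal manipulation, exactly as the text signals when it declares the results to be ``similar to Proposition~\ref{fundamental} and Theorem~\ref{th:companion}.''
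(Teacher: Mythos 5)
Your proposal is correct and follows essentially the same route the paper intends: the paper gives no explicit proof, instead appealing to the tensor decomposition $\mathcal R_{f_1,f_2}\cong\mathcal R_{f_1}\otimes\mathcal R_{f_2}$, the fact that primitive central idempotents of a tensor product are tensor products of primitive central idempotents, and the remark that the claims are ``similar to Proposition~\ref{fundamental} and Theorem~\ref{th:companion}.'' Your write-up simply makes that transport explicit, clause by clause, which is exactly the argument being signalled.
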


%\begin{proof}\
%\begin{enumerate}
    %\item By lemma \ref{centraly}, there is $e_i$ and $e_j$ such that $e_k$ is corespondent to $e_i\otimes e_j.$ Thus $\mathcal R_{f_1,f_2}e_k\cong \mathcal R_{f_1}e_1\otimes \mathcal R_{f_2}e_j.$ Now Proposition \ref{fundamental} gives the result.
    %\item $c\in \mathcal R_{f_1,f_2}e_k$  is correspondent to an element $c_1\otimes c_2\in\mathcal R_{f_1}e_i\otimes \mathcal R_{f_2}e_j.$ According to Proposition \ref{fundamental}, $e_1\otimes e_2(c_1\otimes c_2)=c_1\otimes c_2,$ which completes the proof.
    %\item Rows of $M(e_k)$ is the generator of $\mathcal R_{f_1,f_2} e_k.$
    %\item Let $e_k=\sum\limits_{n=0}^{n=n_1}\sum \limits_{m=0}^{m=n_2}a_{mn}x_1^nx_2^m.$ We have $M(e_k)=\sum\limits_{n=0}^{n=n_1}\sum \limits_{m=0}^{m=n_2}a_{mn}E_{f_1}^n\otimes E_{f_2}^m.$ Now use mixed-product property of tensor products. The rest is similar to Proposition \ref{fundamental}.
    
    %\item it is straightforward.
    %\item By lemma \ref{centraly} and Proposition \ref{fundamental}, there is $e_i$ and $e_j$ such that 
    %$\mathcal R_{f_1,f_2}e_k\cong \mathcal R_{f_1}e_1\otimes \mathcal R_{f_2}e_j\cong U_i\otimes U_j$
%\end{enumerate}
%\end{proof}
\begin{theo}
We have 
$$ \mathcal R_{f_1,f_2} \cong \bigoplus\limits_{i\in I}\bigoplus\limits_{j\in J}(U_i\otimes U_j)$$ 
\begin{enumerate}
   \item Each of the above summands   is an  indecomposible multivariable  serial code with $E_{f_1}^n\otimes E_{f_2}^m$-invariant image  in $\mathcal M_{n_1}(R,f_1)\otimes M_{n_2}(R,f_2)$ for all $0\leqslant n\leqslant n_1-1 , 0\leqslant m \leqslant n_2-1.$ 
   \item Each multivariable serial  code $\mathcal C\subseteq \mathcal R_{f_1,f_2}$ can be seen as  a direct sum of  $U_i\otimes U_j$ for some indices $i,j.$
   \item If a serial code $\mathcal C$ decomposes as  $\mathcal C\cong \bigoplus _{i\in I_1}\bigoplus_{j\in J_1} U_i\otimes U_j, $ for $I_1\subseteq I,J_2\subseteq J,$ then $$\mathcal C^{\perp_{tr}} \cong  \left( \bigoplus_{i\in I_1}\bigoplus_{j\in J_1} U_i\otimes U_j\right)^c= \bigoplus_{i\notin I_1}\bigoplus_{j\notin J_1} U_i\otimes U_j$$
\end{enumerate}
\end{theo}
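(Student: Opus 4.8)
The plan is to obtain the decomposition as the tensor product of two one-variable decompositions and then transport the three refinements across the tensor product. First I would recall the ring isomorphism $\mathcal R_{f_1,f_2}\cong\mathcal R_{f_1}\otimes\mathcal R_{f_2}$ from \cite{Cazaran}, apply Theorem~\ref{th:companion} to each factor to write $\mathcal R_{f_1}\cong\bigoplus_{i\in I}U_i$ and $\mathcal R_{f_2}\cong\bigoplus_{j\in J}U_j$, and then invoke distributivity of $\otimes$ over $\oplus$ exactly as in the chain of isomorphisms displayed just before the preceding Proposition. This yields $\mathcal R_{f_1,f_2}\cong\bigoplus_{i\in I}\bigoplus_{j\in J}(U_i\otimes U_j)$; since each isomorphism is realized through the regular representation $M$ together with the mixed-product property, the identification is compatible with the matrix picture in $\mathcal M_{n_1}(R,f_1)\otimes\mathcal M_{n_2}(R,f_2)$.

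For part~(1) I would use part~(6) of the preceding Proposition to identify $U_i\otimes U_j\cong\mathcal R_{f_1,f_2}e_k$, with $e_k$ corresponding to $e_i\otimes e_j$, and then deduce indecomposability from the fact that $e_i\otimes e_j$ is a primitive central idempotent, so that the ideal it generates admits no nontrivial splitting into orthogonal ideals. The $E_{f_1}^n\otimes E_{f_2}^m$-invariance of the image is immediate from part~(5) of the same Proposition applied to the serial code $U_i\otimes U_j$. I expect this to be the delicate step: over a non-field base the tensor product of two local components need not be local, so the primitivity of $e_i\otimes e_j$, equivalently the indecomposability of $U_i\otimes U_j$, is precisely where the hypothesis $f_1,f_2\in\mathcal J$ and the explicit idempotents supplied by Lemma~\ref{Multi-idem} must be used, rather than being a formal consequence of primitivity in the two factors alone.

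For part~(2) I would rely on Lemma~\ref{Multi-idem}: the complete set $\{e_k\}_{k\in K}$ of central orthogonal idempotents gives $\mathcal R_{f_1,f_2}=\bigoplus_{k\in K}\mathcal R_{f_1,f_2}e_k$, so any ideal $\mathcal C$ satisfies $\mathcal C=\bigoplus_{k\in K}e_k\mathcal C$; regrouping the resulting components according to the $U_i\otimes U_j$ blocks exhibits $\mathcal C$ as a direct sum of such blocks, mirroring part~(2) of Theorem~\ref{th:companion}.

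For part~(3) the key input is the multivariable analogue of Theorem~\ref{th:ann} established just above, which gives $\mathcal C^{\perp_{\mathrm{tr}}}=\mathrm{Ann}(\mathcal C)$. It then remains to compute the annihilator of a sum of idempotent-generated blocks, and orthogonality of the idempotents forces $\mathrm{Ann}\big(\bigoplus_{(i,j)\in S}U_i\otimes U_j\big)=\bigoplus_{(i,j)\notin S}U_i\otimes U_j$, so the trace dual is exactly the complementary sum of blocks. The one point to treat with care is the index bookkeeping: the complement must be taken in $I\times J$, and one should verify that for the codes under consideration it coincides with the product complement $\{i\notin I_1\}\times\{j\notin J_1\}$ written in the statement.
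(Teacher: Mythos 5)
Your proposal is correct and follows essentially the same route the paper takes: the paper offers no separate proof of this theorem, presenting it as a consequence of the displayed chain of isomorphisms $\mathcal R_{f_1,f_2}\cong\mathcal R_{f_1}\otimes\mathcal R_{f_2}\cong\bigoplus_{i}\bigoplus_{j}(\mathcal R_{f_1}e_i\otimes\mathcal R_{f_2}e_j)$, the remark that primitive central idempotents of a tensor product are tensor products of primitive central idempotents (which rests on the explicit construction behind Lemma~\ref{Multi-idem}), and the multivariable analogue of Theorem~\ref{th:ann} for part~(3) --- exactly the ingredients you assemble, and your caution about where primitivity of $e_i\otimes e_j$ actually needs the hypothesis $f_1,f_2\in\mathcal J$ is well placed.

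One point you flag deserves a firmer conclusion than ``one should verify'': the verification fails as stated. The annihilator of $\bigoplus_{(i,j)\in I_1\times J_1}U_i\otimes U_j$ is the sum of the blocks indexed by $(I\times J)\setminus(I_1\times J_1)$, whereas the displayed formula sums only over $(I\setminus I_1)\times(J\setminus J_1)$; already for $|I|=|J|=2$ and $|I_1|=|J_1|=1$ the former has three blocks and the latter one, so the two cannot agree. Your computation of $\mathcal C^{\perp_{\mathrm{tr}}}=\mathrm{Ann}(\mathcal C)$ as the complementary sum of blocks in $I\times J$ is the correct statement; the defect lies in the theorem's final display (and in restricting attention to product-shaped index sets at all, since by part~(2) a general serial code is a sum of blocks over an arbitrary subset of $I\times J$), not in your argument.
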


\begin{remark}
Note that during this section we only needed the ring $R$ to be a chain ring in those parts where those results of the construction of the idempotents in \cite{Serial} where needed.
\end{remark}

Now we return to the general case where R is  local ring.
Assume that  polynomials $f_1(x_1),h_1(x_1)\in R[x_1]$ have the same degree  of $n_1$ and $f_2(x_2),h_2(x_2)\in R[x_2]$ have the same degree  of $n_2$. The main question is that when multivariable codes over rings $\mathcal R_{f_1,f_2}$ and  $\mathcal R_{h_1,h_2}$ are isometric. We know that 
$$\mathcal R_{f_1,f_2}\cong \mathcal M_{n_1}(R,f_1)\otimes \mathcal M_{n_2}(R,f_2)\cong\mathcal M_{1,n_1}(R,f_1)\otimes \mathcal M_{1,n_2}(R,f_2).$$
Moreover, we know that the  
tensor product of two submodules  of free modules 
$\mathcal M_{1,n_1}(R,f_1), \mathcal M_{1,n_2}(R,f_2),$  is a submodule of their  tensor product. Now 
applying Corollary \ref{isometry}, we conclude that
\begin{prop}
\end{prop}
\begin{enumerate}
    \item
    
    Let $f_1(x_1)=x_1^{n_1}-\lambda_1$ and $\omega_1=\omega_ix_1^i$ where $(n_1,i)=1,$ $\lambda_1,\omega_i\in R^*.$\\
    Also let $f_2(x_2)=x_2^{n_2}-\lambda_2$ and $\omega_2=\omega_jx_2^j$ where $(n_2,j)=1,$ $\lambda_2,\omega_j\in R^*.$
    Then $R[x_1,x_2]/\langle f_1(x_1),f_2(x_1)\rangle$ and 
    $R[x_1,x_2]/\langle x_1^{n_1}-\omega_i^{n_1}\lambda_1,\,x_2^{n_2}-\omega_j^{n_2}\lambda_2 \rangle$ are isometrc. 
    \item 
     Let $f_1(x_1)=x_1^{n_1}-\lambda_1$ and $\omega_1=\omega_ix_1^i$ where $(n_1,i)=1,$ $\lambda_1,\omega_i\in R^*.$\\
    Also let $f_2(x_2)=x_2^{n_2}-\lambda_2x_2$ and $\omega_2=\omega_jx_2^j$ where $(n_2-1,j)=1,$ $\lambda_2,\omega_j\in R^*.$
    Then $R[x_1,x_2]/\langle f_1(x_1),f_2(x_1)\rangle$ and 
    $R[x_1,x_2]/\langle x_1^{n_1}-\omega_i^{n_1}\lambda_1,\,x_2^{n_2}-\omega_j^{n_2-1}\lambda_2^jx_2 \rangle$ are isometric. 
    
    \item  Let $f_1(x_1)=x_1^{n_1}-\lambda_1x_1$ and $\omega_1=\omega_ix_1^i$ where $(n_1-1,i)=1,$ $\lambda_1,\omega_i\in R^*.$\\
    Also let $f_2(x_2)=x_2^{n_2}-\lambda_2x_2$ and $\omega_2=\omega_jx_2^j$ where $(n_2-1,j)=1,$ $\lambda_2,\omega_j\in R^*.$
    Then $R[x_1,x_2]/\langle f_1(x_1),f_2(x_1)\rangle$ and 
    $R[x_1,x_2]/\langle x_1^{n_1}-\omega_i^{n_1-1}\lambda_1^ix_1,\,x_2^{n_2}-\omega_j^{n_2-1}\lambda_2^jx_2 \rangle$ are isometric. 

\end{enumerate}

\section{Conclusions}
In the present paper, we have developed a transform approach to polycyclic codes  under the hypothesis that the polynomial defining the ambient space is multiplicity free  which is equivalent, in
the cyclic codes case, to the coprimality of the length and the alphabet size. We have also extended that approach to multivariable serial codes under an equivalent hypothesis.
The main open
problem is to derive a similar approach for the repeated root case, at least for the case when the ambient space is a principal ideal ring \cite{PIR}.

\bibliographystyle{plain}
 
\bibliography{sample.bib}

\begin{thebibliography}{10}

\bibitem{0dual}
Adel Alahmadi, Steven Dougherty, Andr\'{e} Leroy, and Patrick Sol\'{e}.
\newblock On the duality and the direction of polycyclic codes.
\newblock {\em Adv. Math. Commun.}, 10(4):921--929, 2016.

\bibitem{BGG12}
Aicha Batoul, Kenza Guenda, and T.~Aaron Gulliver.
\newblock Some constacyclic codes over finite chain rings.
\newblock {\em Adv. Math. Commun.}, 10(4):683--694, 2016.

\bibitem{WI93}
William~C. Brown.
\newblock {\em Matrices over commutative rings}, volume 169 of {\em Monographs
  and Textbooks in Pure and Applied Mathematics}.
\newblock Marcel Dekker, Inc., New York, 1993.

\bibitem{Kelarev}
J.~Cazaran and A.~V. Kelarev.
\newblock On finite principal ideal rings.
\newblock {\em Acta Math. Univ. Comenian. (N.S.)}, 68(1):77--84, 1999.

\bibitem{Cazaran}
Jilyana Cazaran.
\newblock Tensor products and quotient rings which are finite commutative
  principal ideal rings.
\newblock {\em Math. J. Okayama Univ.}, 41:1--14 (2001), 1999.

\bibitem{idempotents}
Mohammed~Elhassani Charkani and Jo\"{e}l Kabore.
\newblock Primitive idempotents and constacyclic codes over finite chain rings.
\newblock {\em Gulf J. Math.}, 8(2):55--67, 2020.

\bibitem{CH95}
David Chillag.
\newblock Regular representations of semisimple algebras, separable field
  extensions, group characters, generalized circulants, and generalized cyclic
  codes.
\newblock {\em Linear Algebra and its Applications}, 218:147--183, 1995.

\bibitem{Dinh15}
Hai~Q. Dinh, Chengju Li, and Qin Yue.
\newblock Recent progress on weight distributions of cyclic codes over finite
  fields.
\newblock {\em J. Algebra Comb. Discrete Struct. Appl.}, 2(1):39--63, 2015.

\bibitem{Fot20}
Alexandre Fotue-Tabue, Edgar Mart\'{\i}nez-Moro, and J.~Thomas Blackford.
\newblock On polycyclic codes over a finite chain ring.
\newblock {\em Adv. Math. Commun.}, 14(3):455--466, 2020.

\bibitem{Local}
G.~{Ganske} and B.~R. {McDonald}.
\newblock {Finite local rings}.
\newblock {\em {Rocky Mt. J. Math.}}, 3:521--540, 1973.

\bibitem{GS13}
Jian Gao, Linzhi Shen, and Fang-Wei Fu.
\newblock Bounds on quasi-cyclic codes over finite chain rings.
\newblock {\em J. Appl. Math. Comput.}, 50(1-2):577--587, 2016.

\bibitem{wood}
Marcus Greferath, Thomas Honold, Cathy Mc~Fadden, Jay~A. Wood, and Jens
  Zumbr\"{a}gel.
\newblock Mac{W}illiams' extension theorem for bi-invariant weights over finite
  principal ideal rings.
\newblock {\em J. Combin. Theory Ser. A}, 125:177--193, 2014.

\bibitem{Jacobson}
Nathan Jacobson.
\newblock {\em Basic algebra. {II}}.
\newblock W. H. Freeman and Company, New York, second edition, 1989.

\bibitem{Nechaev}
V.~L. Kurakin, A.~S. Kuzmin, V.~T. Markov, A.~V. Mikhalev, and A.~A. Nechaev.
\newblock Linear codes and polylinear recurrences over finite rings and modules
  (a survey).
\newblock In {\em Applied algebra, algebraic algorithms and error-correcting
  codes ({H}onolulu, {HI}, 1999)}, volume 1719 of {\em Lecture Notes in Comput.
  Sci.}, pages 365--391. Springer, Berlin, 1999.

\bibitem{La13}
D.~Laksov.
\newblock Diagonalization of matrices over rings.
\newblock {\em Journal of Algebra}, 376:123--138, 2013.

\bibitem{LXG16}
Shuxing Li, Maosheng Xiong, and Gennian Ge.
\newblock Pseudo-cyclic codes and the construction of quantum {MDS} codes.
\newblock {\em IEEE Trans. Inform. Theory}, 62(4):1703--1710, 2016.

\bibitem{Steve}
Sergio~R. L\'{o}pez-Permouth, Hakan \"{O}zadam, Ferruh \"{O}zbudak, and Steve
  Szabo.
\newblock Polycyclic codes over {G}alois rings with applications to
  repeated-root constacyclic codes.
\newblock {\em Finite Fields Appl.}, 19:16--38, 2013.

\bibitem{SteveCycli}
Sergio~R. L\'{o}pez-Permouth, Benigno~R. Parra-Avila, and Steve Szabo.
\newblock Dual generalizations of the concept of cyclicity of codes.
\newblock {\em Adv. Math. Commun.}, 3(3):227--234, 2009.

\bibitem{MS77}
F.~J. MacWilliams and N.~J.~A. Sloane.
\newblock {\em The theory of error-correcting codes. {II}}.
\newblock North-Holland Publishing Co., Amsterdam-New York-Oxford, 1977.
\newblock North-Holland Mathematical Library, Vol. 16.

\bibitem{PIR}
E.~Mart\'{\i}nez-Moro, A.~Pi\~{n}era Nicol\'{a}s, and I.~F. R\'{u}a.
\newblock Multivariable codes in principal ideal polynomial quotient rings with
  applications to additive modular bivariate codes over {$\mathbb F_4$}.
\newblock {\em J. Pure Appl. Algebra}, 222(2):359--367, 2018.

\bibitem{Serial}
E.~Mart\'{\i}nez-Moro and I.~F. R\'{u}a.
\newblock Multivariable codes over finite chain rings: serial codes.
\newblock {\em SIAM J. Discrete Math.}, 20(4):947--959, 2006.

\bibitem{A12}
James~L. Massey.
\newblock Codes and ciphers: {F}ourier and {B}lahut.
\newblock In {\em Codes, curves, and signals ({U}rbana, {IL}, 1997)}, volume
  485 of {\em Kluwer Internat. Ser. Engrg. Comput. Sci.}, pages 105--119.
  Kluwer Acad. Publ., Boston, MA, 1998.

\bibitem{M1998}
James~L. Massey.
\newblock The discrete fourier transform in coding and cryptography.
\newblock In {\em IEEE Inform. Theory Workshop, ITW 98}, pages 9--11, 1998.

\bibitem{NS00}
Graham~H. Norton and Ana Salagean{-}Mandache.
\newblock On the key equation over a commutative ring.
\newblock {\em Des. Codes Cryptogr.}, 20(2):125--141, 2000.

\bibitem{R19}
E.~Rawashdeh.
\newblock A simple method for finding the inverse matrix of a vandermonde
  matrix.
\newblock 2019.

\bibitem{polyinv}
Minjia {Shi}, Xiaoxiao {Li}, Zahra {Sepasdar}, and Patrick {Sol\'e}.
\newblock {Polycyclic codes as invariant subspaces}.
\newblock {\em {Finite Fields Appl.}}, 68:14, 2020.
\newblock Id/No 101760.

\bibitem{Poly}
Minjia Shi, Li~Xu, and Patrick Sol\'{e}.
\newblock Construction of isodual codes from polycirculant matrices.
\newblock {\em Des. Codes Cryptogr.}, 88(12):2547--2560, 2020.

\bibitem{Vel13}
S.~Veldsman.
\newblock Rings of matrices generated by a companion matrix.
\newblock {\em Acta Math. Hungar.}, 140(1-2):12--33, 2013.

\end{thebibliography}

\end{document}